\documentclass[11pt]{article}
\usepackage{fullpage}
\usepackage{natbib}
\usepackage[utf8]{inputenc}
\usepackage{amsmath,amsthm,amssymb}
\usepackage{thmtools} 
\usepackage{thm-restate}
\usepackage{amsfonts,bm,bbold,dsfont}
\usepackage{graphicx,epsfig}
\usepackage[boxed,linesnumbered]{algorithm2e}
\usepackage{enumitem}
\usepackage[normalem]{ulem}
\usepackage{xspace}
\usepackage[dvipsnames]{xcolor}
\usepackage[colorlinks=true,citecolor=ForestGreen,linkcolor=ForestGreen,urlcolor=NavyBlue]{hyperref}
\usepackage[nameinlink,capitalise]{cleveref}
\usepackage{thmtools,thm-restate}
\usepackage{verbatim}
\usepackage{bbm}
\newtheorem{lemma}{Lemma}
\newtheorem{theorem}{Theorem}
\newtheorem{corollary}[theorem]{Corollary}

\newtheorem{proposition}[theorem]{Proposition}

\theoremstyle{definition}

\newenvironment{proofof}[1]{\vspace{0.1in}\noindent{\textsc{Proof of #1.}}}{\hfill\qed}

\makeatletter
\def\blfootnote{\xdef\@thefnmark{}\@footnotetext}
\makeatother

\DeclareMathOperator*{\argmax}{argmax}

\newcommand{\prob}[2][]{\text{\bf Pr}\ifthenelse{\not\equal{}{#1}}{_{#1}}{}\!\left[#2\right]}
\newcommand{\expect}[2][]{\text{\bf E}\ifthenelse{\not\equal{}{#1}}{_{#1}}{}\!\left[#2\right]}

\newcommand{\opt}{{\normalfont\textsc{Opt}}}

\newcommand{\E}{\mathbb{E}}

\def\Pr{\ensuremath{\mathrm{Pr}}}

\newcommand{\notshow}[1]{{}}

\newcommand*{\pr}[2][]{\text{Pr}\ifx\\\left[#1\right]\\\else_{#1}\fi \left[#2\right]}

\newcommand \eqdef{\stackrel{\mbox{\tiny{def}}}{=}}

\defcitealias{CDW}{CDW '16}
\defcitealias{FGKK}{FGKK '16}
\defcitealias{DW}{DW '17}
\defcitealias{DHP}{DHP '17}
\defcitealias{Myerson}{Mye '81}
\defcitealias{DDT15}{DDT '15}
\defcitealias{SSW17}{SSW '18}
\defcitealias{MV}{MV '07}
\defcitealias{CheGale2000}{Che and Gale 2000}
\defcitealias{DGSSW}{DGSSW '19}
\defcitealias{CDGM19}{CDGM '19}
\defcitealias{EFFGK}{EFFGK '19}

\newcommand{\mrand}{\ensuremath{\mathcal{M}_\mathrm{toss}}}
\newcommand{\mdet}{\ensuremath{\mathcal{M}_\mathrm{phase}}}
\newcommand{\msample}{\ensuremath{\mathcal{M}_{Psamp}}}
\newcommand{\mpred}{\ensuremath{\mathcal{M}_{Pmax}}}

\newcommand{\FB}{\textsc{FB}}
\newcommand{\LOT}{\textsc{OPT}_L}

\newcommand{\RW}[1]{{\mathcal{R}}^{(#1)}}
\newcommand{\rw}[1]{{r}^{(#1)}}

\newcommand{\B}[2]{{B}^{(#1)}_{\le #2}}

\newcommand{\predevent}{E_{\text{pred}}}
\newcommand{\predeventbar}{\bar{E}_{\text{pred}}}

\newcommand{\maxalloc}{E_{max}}
\newcommand{\lotalloc}{E_{lot}}

\newcommand{\predindex}{t_{\text{pred}}}
\newcommand{\predagent}{\hat{\imath}}

\newcommand{\secevent}{E_{sec}}
\newcommand{\initevent}{H_2^*}
\newcommand{\sphase}{N_{\text{sample}}}
\newcommand{\mphase}{N_{\text{max}}}
\newcommand{\lphase}{N_{\text{lot}}}
\newcommand{\lphasebar}{\overline{N_{\text{lot}}}}
\newcommand{\pmax}{p_{\text{max}}}
\newcommand{\plot}{p_{\text{lot}}}

\title{Knowing Who, Not How Much: Learning-Augmented Mechanisms for Consumer Utility Maximization}
\author{Kira Goldner\thanks{Boston University; {\tt goldner@bu.edu}. 
Supported by NSF CAREER Award CCF-2441071.} 
\and 
Divyarthi Mohan\thanks{Columbia University; {\tt divyarthi.m@columbia.edu}. This work was done while employed at Boston University.}
\and 
Thodoris Tsilivis\thanks{Boston University; {\tt tsilivis@bu.edu}. Supported by NSF CAREER Award CCF-2441071.}}

\begin{document}

\maketitle

\begin{abstract}
We study consumer utility maximization in an online random-order model where strategic agents arrive sequentially.
To circumvent strong impossibility results for utility maximization, we turn to the framework of learning-augmented mechanism design. 
Crucially, we show that the types of predictions commonly used in learning-augmented mechanism design (such as predictions of agent values or the optimal value) are not useful for utility maximization, where payments are directly at odds with the objective. Instead, we identify that a qualitatively different kind of prediction suffices: the identity of the highest-valued agent. 
First, we provide a deterministic truthful mechanism for our online setting by adapting offline randomized techniques. Then, we augment our mechanism with predictions. 
When the predictions are correct, we achieve a constant approximation to the optimal solution under full information (consistency), and even when predictions are arbitrarily bad, we guarantee a constant approximation to the best implementable solution  (robustness). 
\end{abstract}

\section{Introduction}

Consider a simple resource allocation problem, where a social planner must allocate a single item to one of $n$ agents with private values $v_1 \ge v_2 \ge \cdots \ge v_n$ for being allocated. If the values were publicly known, the welfare-maximizing allocation is trivial:  allocate the item to the highest-valued agent, obtaining utility $v_1$. However, the social planner cannot distinguish the agents without incentivizing them to report their values through the use of some sort of payments. For instance, the classical second-price auction incentivizes truthful reporting by allocating to the highest bidder and charging them the second-highest bid~\cite{vickrey}. Yet in many settings, charging monetary payments is infeasible or undesirable. We instead use other forms of payments or ``ordeals,'' such as wait times,  bureaucracy like filling out paperwork, or reductions in service (such as in cloud computing). These non-monetary payments are not transferable, but rather are ``burnt,'' and exist only as a tool to elicit information. Thus, in these settings, the social planner's goal is actually to maximize the consumer utility (or ``residual surplus''): the value obtained for the allocation minus the cost incurred by the agents. 

The objective of consumer utility maximization highlights the tension between allocative efficiency and the cost of elicitation. While the second-price auction obtains optimal welfare $v_1$, the consumer utility is only $v_1-v_2$, as the winner suffers a cost in order to be distinguished from the rest of the agents. When $v_1$ and $v_2$ are close, nearly all of the surplus is consumed by the elicitation cost. On the other hand, a simple lottery that allocates the item uniformly at random without imposing any payments cannot distinguish the high-valued agent from the rest. The expected consumer utility $\sum_i v_i/n$ can be bad when only a few agents have high value. This tension between elicitation and efficiency leads to a strong impossibility result for utility maximization. No truthful mechanism, even with full distributional knowledge, can guarantee better than an $O(\log n)$-approximation to the optimal social welfare (i.e., the first-best benchmark with complete information)~\cite{HartlineR08}. The objective of utility maximization has been under-explored by the algorithmic community, but even in these limited works, this logarithmic barrier repeatedly crops up.

\paragraph{Mechanisms with Predictions.} In order to circumvent this impossibility result, we turn to the framework of mechanism design with predictions (e.g.,~\cite{XuL22,AgrawalBGOT22}). Suppose the social planner now has access to predictions about the values of the agents, e.g., from an ML model or other black-box source. Can we avoid exorbitant elicitation costs while still obtaining good allocative efficiency? Crucially, as with LLMs, these predictions could be perfectly accurate, or they might be unreliable hallucinations. Hence, we seek to design mechanisms with best-of-both-worlds guarantees: (1) consistency, or achieving constant-factor approximation by harnessing correct predictions, and (2) robustness, or matching worst-case guarantees when predictions are imperfect. 

A natural first question that arises is: what predictions suffice, or how much information granularity is needed to achieve such approximation guarantees? 
Prior works on mechanism design with predictions have studied other objectives (e.g., revenue and welfare) and considered a variety of prediction models, from complete value predictions about each agent~\citep{XuL22,caragiannisG24,LWZ24} to predictions about the optimal solution, i.e., highest value~\citep{AGKK20,AgrawalBGOT22,BGTZ24}. Unfortunately, the latter does not help for our objective of utility maximization, as~\citet{QV23} showed that it is not possible to obtain better than a $\Omega(\log n)$-approximation even with \emph{perfect predictions} about the highest value.\footnote{In fact, achieving a constant approximation to the first-best remains impossible even when the correct predictions are about the complete multi-set of all values (without knowing which agent has what value) for the random order arrival model.} This is because the mechanism still cannot identify which agent holds this value without resorting to payments, which are directly at odds with our objective. Hence, in this paper, we perform the first dedicated analysis on predictions-augmented mechanism design for utility maximization.

\paragraph{Our Work. } The goal of this paper is to design mechanisms that maximize utility when agents arrive and decisions must be made online. We restrict our attention to designing deterministic mechanisms, as in practice, consumers find randomization unpalatable~\citep{liu2025deterministic,tversky1992advances}. Finally, we seek to explore the power of prediction in online deterministic utility-maximizing mechanism design. How much utility do (potentially incorrect) predictions enable us to capture? Can we give best-of-both-worlds guarantees, giving guarantees for when the predictions are both correct and incorrect? What benchmarks should we use? What sort of predictions are best-suited to this task? How does it differ from offline settings or settings with other objectives? 

We answer all of these questions.  
In Section~\ref{sec:randomorder}, we work toward our eventual robustness guarantee by designing a deterministic mechanism that guarantees a constant-factor approximation to $\LOT$ (our robustness benchmark). We do so by converting an offline randomized mechanism into an online deterministic mechanism and proving that the guarantees continue to hold.

In Section~\ref{sec:pred}, we first explore the different possible predictions and show that a qualitatively different type of prediction suffices: knowing who has the highest value, rather than what that value is. We then augment our deterministic mechanism with predictions and compare methods to show that there is one simple way to do so that achieves a best-of-both-worlds guarantee: both consistency (constant-approximation to the first-best benchmark when predictions are correct) and robustness (constant-approximation to the best implementable solution).   

\subsection{Related Work}

\paragraph{Utility Maximization.} From the perspective of utility maximization, the most relevant work to ours is the seminal work of \citet{HartlineR08}, who study single-item and multi-unit auctions in the offline setting. They provide randomized mechanisms that give a constant-approximation to a prior-free benchmark (which coincides with our robustness benchmark) and an $O(1/\log n)$-approximation to the optimal welfare. Their work also considers the Bayesian mechanism design setting, which has more recently been studied beyond single-item auctions in a number of recent works \cite{fotakis2016efficient,EssaidiGW24, goldnerlundy25,ezra2024optimal,GH24,EGKT26,BOT25,BOT26}.

\citet{QV23} consider the online pen testing problem, which is equivalent to utility maximization when restricted to posted price mechanisms. They study both the random order setting (as we do) and the Bayesian setting, providing an $O(1/\log n)$-approximation to the optimal social welfare (our consistency benchmark). 

\emph{Our work.} We adapt these two works to the random arrival setting to design a deterministic mechanism that maintains the robustness guarantee and improves the consistency guarantee to a constant approximation.

\paragraph{Predictions.} There has been a recent burst of interest in mechanism design with predictions, initiated by \citet{AgrawalBGOT22} (on facility location) and \citet{XuL22} (for revenue maximization). The works most relevant to us are those that maximize revenue.  \citet{XuL22} consider deterministic mechanisms for revenue maximization with predictions about each agent's value. \citet{caragiannisG24} study randomized mechanisms and provide tight consistency and robustness tradeoffs. \citet{LWZ24} consider the digital goods or $k$-unit setting ($k>1$) and provide $1$-consistency to the highest value and constant robustness to a different benchmark (that doesn't apply for single-item settings). \citet{BGTZ24} consider revenue maximization in the online random order model when having access to a prediction about the highest value. They show constant consistency to the highest value and constant robustness to the second-highest value. 

The only prior work considering our objective of utility maximization is~\cite{QV23}, who consider the case of having always perfect predictions.\footnote{Their work is not described using the prediction framework, but for consistency guarantees, it applies. They also consider some error-tolerant versions, but don't provide any guarantees when the prediction is arbitrarily bad.} Having access to the highest value prediction doesn't help improve the $1/\log n$ approximation to optimal welfare. However, having access to the multiset of all values (without knowing the identity of the agents) can improve the approximation to $\log\log n/\log n$. They do not study robustness guarantees.

The framework of algorithms with predictions and best-of-both-worlds guarantees originated in online algorithms. Recent work in the random order model has studied the secretary problem when there are predictions about each agent's value~\citep{KDBHM26,FY24}. 

\section{Preliminaries}

We consider the setting of auctioning off a single item to $n$ agents. Each agent $i$ has a private value $v_i$ for the item. We study the prior-free (or worst-case) setting, in which the mechanism has no prior information about the values $v_i$; they may be any input instance. We assume, without loss of generality, that the agents are indexed from highest to lowest value, that is, $v_1 > \cdots > v_n$, where index $i$ denotes the agent with the $i$-th highest value, and that these values are distinct.

The $n$ agents arrive online according to a uniformly-at-random permutation $\sigma$---this is the random-order model for online arrival. As the agents arrive, they report their bid to the mechanism, and an irrevocable online decision regarding allocation and payment is made. We use $\sigma(i)$ to denote the $i$-th agent that arrives according to the permutation. We let $[a:b]$ denote the set of integers between (and including) $a$ and $b$. Then $\sigma[1:n]$ denotes the entire arrival sequence of the agents. 

An agent $i$ who is allocated an item at price $p$ obtains utility $v_i - p$, where $v_i$ is their value for the item. We study dominant-strategy incentive-compatible mechanisms, in which for any agent $i$, it is a utility-maximizing strategy for them to bid their true value $v_i$, regardless of what all other agents report. Additionally, we require individual rationality, which imposes that the utility of all agents be non-negative. All of our mechanisms will use posted prices, which are trivially dominant-strategy incentive-compatible and individually rational.

Our objective is consumer utility maximization. In the single-item setting, this is the (expected) utility of the agent who is allocated the item. We consider two benchmarks to compare our mechanisms' utility guarantees against:

\begin{enumerate}
    \item \textbf{First-Best} is the maximum achievable welfare under full information, i.e., if the mechanism observes agents’ true values and agents do not misreport. Formally, $\FB = v_1$.
    \item \textbf{Optimal-lottery} is the highest expected utility that can be achieved by a mechanism that posts a price $p$ to the agents, randomly allocates to any one of them that reported a value higher than $p$, and charges that agent a price of $p$. Formally: 
    \[
    \LOT = \max_{p\in \mathbb{R}_{+}} \frac{1}{|\{i \,: \, v_i > p\}|}\sum_{i : v_i > p} (v_i - p).
    \]
    
\end{enumerate}

We note that $\LOT$ is a natural benchmark for the prior-free setting, and for a single-item auction, is in fact exactly the same as the prior independent benchmark $\mathcal{G}(\vec v)$ used in~\citep{HartlineR08}. This is because $\LOT$ is the best utility achievable by any implementable mechanism. Moreover, the gap between $\FB$ and $\LOT$ (and thus any implementable mechanism) is $\Omega(\log n)$ in the worst case. 

Our main research agenda revolves around answering the question: what types of predictions can surpass tight performance barriers in single-item utility-maximizing online auctions? 

In the learning-augmented mechanism design literature, performance is measured in terms of \emph{robustness} and \emph{consistency}. For our work, we define these as follows:

\begin{itemize}
    \item \textbf{$\gamma$-Consistency:} For $\gamma \in [0,1]$ a mechanism is $\gamma$-consistent if it obtains an $\gamma$-approximation ratio to $\FB$ when the predictions are correct.
    \item \textbf{$\rho$-Robustness:} For $\rho \in [0,1]$, a mechanism is $\rho$-robust if it obtains an $\rho$-approximation ratio to $\LOT$ under arbitrarily erroneous predictions.
\end{itemize}

\section{Competitive Mechanisms for Online Random Order Arrival} \label{sec:randomorder}

In this section, we present a deterministic online mechanism that obtains a constant approximation to the optimal lottery $\LOT$ (our robustness benchmark) under the random-order model. In Section~\ref{sec:pred}, we will subsequently adapt this mechanism in order to design a learning-augmented mechanism that leverages predictions and additionally achieves a constant consistency guarantee to the first-best benchmark $\FB$.

\subsection{Warm-up: A Randomized Online Mechanism}
\label{sec:randomized-mechanism}

In this subsection, we present a \emph{randomized} mechanism for the single-item utility maximization problem, assuming random arrival order of the agents. We use the offline Random Sampling Optimal Lottery (RSOL) mechanism by~\citet{HartlineR08} as our starting point. RSOL is a truthful mechanism that obtains a constant approximation to $\LOT$, the optimal $p$-lottery. We propose a modification\footnote{The original RSOL mechanism runs Vickrey on $S$ with the same probability that we offer $\pmax(\bar{S})$ to $S$. When $v_2 \in \bar{S}$, this is equivalent, so their analysis holds identically for this modification.} of RSOL for the online random-order model, instead of drawing on the online algorithm for pen testing in~\cite{QV23}. This is for two reasons: (1) we want a meaningful constant-approximation to $\LOT$ (rather than a $\log n$-approximation to $\FB$), and (2) we believe RSOL is a simpler mechanism that is well-suited for derandomization. 

\paragraph{Modified RSOL~\citep{HartlineR08}.} The agents are randomly partitioned into two sets---the sample set $\bar{S}$ and the execution set $S$---where each agent is placed independently into either set with probability $\frac{1}{2}$. The mechanism then computes two candidate lottery prices:
\begin{itemize}
    \item $\pmax(\bar{S})$: The maximum value in the sample $\bar{S}$. That is, $\pmax(\bar{S}) := \max_{i\in \bar{S}} v_i$.
    \item $\plot(\bar{S})$: The lottery price that maximizes utility among the agents in the sample $\bar{S}$. That is,
    \[
    \plot(\bar{S}) = \argmax_{p\ge 0} \frac{\sum_{i\in \bar{S}}(v_i - p)^+}{\lvert \{i\in \bar{S} : v_i > p\}\rvert} .
    \]

\end{itemize}
The mechanism then tosses a fair coin to set $p(\bar{S})$\footnote{We will drop $\bar{S}$ for brevity when clear from context.} as one of these two prices, and executes a $p(\bar{S})$-lottery on the execution set $S$.

\paragraph{Coin-toss Mechanism $\mrand$.} For a uniformly random arrival order $\sigma$, we draw $t$ from $\text{Bin}(n,\frac{1}{2})$, and simply observe the first $t$ arriving agents as a sample $\bar{S} \gets \sigma[1:t]$; we do not serve them. Then, we toss a fair coin to set $p(\bar{S})$ as either $\pmax(\bar{S})$ or $\plot(\bar{S})$ (as in modified RSOL). We then offer the item to any agent after $t$ for the posted price $p = p(\bar{S})$ and allocate to the first agent who accepts the price.

We argue that the coin-toss mechanism $\mrand$ achieves the same guarantees as Modified RSOL when the agents arrive online in a uniformly random order $\sigma$ by noting the following. 

First, the induced distribution over the sample set $\bar S$ in mechanism $\mrand$ is the same as the distribution of the sample set in RSOL.

Second, both mechanisms use a fair coin to decide which price to post on the execution set. Thus, the distribution of the price $p$ is exactly the same.

Finally, we crucially observe and leverage the equivalence between random-order posted-price mechanisms and offline $p$-lotteries. An offline $p$-lottery uniformly-at-random selects some agent in $S$ that reported a bid larger than the price $p$ to allocate to and charge $p$. Equivalently, an online posted price mechanism with price $p$ accepts the first (random) agent in $S$ that reports a bid larger than the price $p$, allocating and charging $p$. Due to this equivalence, we sometimes conflate random-order $p$ posted prices with $p$-lotteries.

As a result, we have argued the same distribution over outcomes for both mechanisms, meaning that the analysis and guarantees of (modified) RSOL extend to the coin-toss mechanism $\mrand$. For completeness, we state (and refer to \citep{HartlineR08}) the guarantees of $\mrand$ (equivalently RSOL).
    \begin{theorem}
    \label{thm:random-order-randomized}
    The randomized mechanism $\mrand$ for the random-order setting achieves a constant $\frac{1}{1250}$-approximation to $\LOT$, the expected utility of the optimal $p$-lottery. 
\end{theorem}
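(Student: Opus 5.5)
The proof has two parts. First, I would show that $\mrand$ induces exactly the same outcome distribution as Modified RSOL. Second, I would prove the constant for Modified RSOL directly, following the structure of \citet{HartlineR08}.

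\textbf{Reduction to Modified RSOL.} Drawing $t\sim\text{Bin}(n,\frac12)$ and taking $\bar S=\sigma[1:t]$ for a uniform $\sigma$ gives a sample set in which each agent appears independently with probability $\frac12$. This holds because, conditional on $t$, the set $\sigma[1:t]$ is a uniform $t$-subset. The coin is independent of everything else, so the price $p(\bar S)$ has the same law in both mechanisms. Conditional on $\bar S$, the agents of $S$ arrive in uniformly random relative order. Hence the first agent of $S$ with $v_i>p$ is uniform among $\{i\in S: v_i>p\}$, which is exactly a $p$-lottery on $S$. It therefore suffices to show that Modified RSOL earns at least $\LOT/1250$.

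\textbf{Analysis of Modified RSOL.} Let $p^*$ be the optimal lottery price and $k=|\{i:v_i>p^*\}|$. I would split into two cases.

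\emph{Case $k$ small, say $k\le c$ for a constant $c$.} Here $\LOT\le v_1$, and in fact $\LOT\le \frac1k\sum_{i\le k}(v_i-p^*)$. With constant probability, agent $1\in S$ and agent $2\in\bar S$, and the coin then selects $\pmax=v_2$. On this event the $p$-lottery on $S$ has exactly one taker, agent 1, whose utility is $v_1-v_2$. This alone is not enough when $v_1\approx v_2$. To cover that situation, I would instead use the event that the top $j$ agents for the relevant $j\le k$ split appropriately, which happens with probability at least $2^{-O(c)}$. One then compares with the lottery at price $v_{j+1}$ and uses the fact that the optimal lottery's utility is an average over its top-$k$ set.

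\emph{Case $k$ large.} By a Chernoff bound, both $\bar S$ and $S$ contain about $k/2$ of the agents above $p^*$ with constant probability. The key step is to show that the sample-optimal price $\plot(\bar S)$, evaluated on $S$, achieves a constant fraction of $\LOT$. I would use the standard random-sampling argument. For any price $p$, the lottery utility $\frac{\sum(v_i-p)^+}{|\{v_i>p\}|}$ computed on $\bar S$ and computed on $S$ are within constant factors of each other, simultaneously over the relevant "large-count" prices. I would prove this through a balanced-split lemma: for every prefix of the sorted values, the split stays within a constant factor, with probability bounded below by a constant, uniformly over prefixes of size $\ge c$. Prices with few agents above them are handled by the $\pmax$ branch.

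\textbf{Main obstacle.} The hard step is the uniform concentration: $\plot(\bar S)$ might select a price above which very few agents lie in $\bar S$, even though that price does poorly on $S$. I would handle this by guarding with the $\pmax$ branch, charging such prices against $v_{\max}(S)-\pmax$-type terms. Combining the $1/2$ from the coin, the constant probability of the balanced-split event, and the constant-factor losses would give a bound like $1/1250$. Since the paper explicitly defers to \citep{HartlineR08} for this computation, I would cite their RSOL bound once the equivalence is established. For $v_2\in\bar S$ the modification coincides with their mechanism, and for $v_2\in S$ the $\pmax$ lottery yields nonnegative utility, so their analysis still applies.
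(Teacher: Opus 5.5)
Your reduction matches the paper's argument, and citing the RSOL guarantee of \citet{HartlineR08} on top of it is all that is needed. The reduction is that $t\sim\mathrm{Bin}(n,\tfrac12)$ makes $\bar S=\sigma[1:t]$ a uniform random subset, the independent coin gives the same law for the price, and a random-order posted price on $S$ is a $p$-lottery on $S$. Your intermediate self-contained sketch is unnecessary and, as written, does not close: the posted price $\plot(\bar S)$ is not tied to the size $k$ of the optimal lottery set, so in either of your cases it may land on a prefix whose split you have not controlled, and appealing to the independently selected $\pmax$ branch does not repair this by itself; the argument behind \Cref{lem:balanced-HR} (mirrored in Appendix~\ref{app:hr-analog}) avoids the case split by conditioning on $1\in S$, $2\in\bar S$ together with balancedness of every prefix $i\ge 2$, so the lottery branch loses only the $v_1-v_2$ term that the $\pmax$ branch recovers, giving $\tfrac{v_1-v_2}{8}+\tfrac{\LOT-(v_1-v_2)}{1250}\ge\tfrac{\LOT}{1250}$.
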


\subsection{A Deterministic Online Mechanism}

    We start by discussing how randomness is used by the coin-toss mechanism $\mrand$ and the adjustments we make to derandomize it and create the deterministic mechanism $\mdet$. The randomized mechanism makes exactly two random choices.

    The first choice concerns where to split the arriving sequence between sample and execution. In $\mrand$, the partition point $t$ is drawn from a binomial distribution so that the set of agents observed before posting a price has size $\frac{n}{2}$ \emph{in expectation}. In $\mdet$, we instead fix this split deterministically at $\frac{n}{2}$.
    
    The second choice concerns which price to post. In $\mrand$, a fair coin determines whether the mechanism posts $\pmax$ or $\plot$. In $\mdet$, we instead adapt this by posting both prices at different times, which leads naturally to a three-phase structure: a sample phase, followed by a max phase in which the price is $\pmax$ (the max value observed in the initial sample phase), and finally a lottery phase in which the price is $\plot$ (the optimal lottery price over the observed values in both the first two phases). More specifically, the first half of the arrival sequence is itself split in secretary style between the sample and max phases, while the lottery phase occupies the remaining half.
    
    These modifications create qualitative differences between the randomized mechanism $\mrand$ and the deterministic mechanism $\mdet$. First, the $\pmax$ and $\plot$ phases are no longer each executed with probability $\frac{1}{2}$. Additionally, the sets on which these prices are evaluated no longer coincide (as random variables) with the execution set $\bar S$ in $\mrand$. Nevertheless, the overall proof strategy remains close in spirit to that of \citep{HartlineR08}; the main additional work is to show that the structural properties needed for the lottery-phase analysis continue to hold under this specific partition of the arrival sequence.

\subsubsection{The Deterministic Mechanism} \label{subsec:det}

We now formally define the deterministic mechanism $\mdet$ and proceed with the main theorem. 

\paragraph{Deterministic Phases Mechanism $\mdet$.} Given a sequence of $n$ agents arriving in random order permutation $\sigma$ $(v_1, v_2, \ldots, v_n)$, the mechanism progresses in three phases: 
\begin{enumerate}
    \item \textbf{Sample phase} $\sphase \gets \sigma[1:\tfrac{n}{2e}]$. Observe but do not serve the first $n/2e$ agents, using them to  compute $p_{\max} \gets \max_{i \in \sphase} v_i$.
    \item \textbf{Max phase} $\mphase \gets \sigma[\tfrac{n}{2e}+1:\tfrac{n}{2}]$. Post a price of $\pmax$ to these agents and allocate to the first agent who accepts (if any). 
    If no agent accepts, denote $\lphasebar = \sigma[1:\tfrac{n}{2}]$, and  compute $\plot \gets \plot(\lphasebar)$ the optimal $p$-lottery for agents in $\lphasebar$ (those in the first two phases).
    \item \textbf{Lottery phase} $\lphase \gets \sigma[\tfrac{n}{2}+1:n]$. For the remaining agents in $\lphase$, post price $\plot$ and allocate to the first bidder who accepts (with value $v_i > \plot$).
\end{enumerate}
\begin{theorem}
    \label{thm:random-order-deterministic}
    The deterministic mechanism $\mdet$ for the random-order setting achieves a constant $\frac{1}{625e}$-approximation to $\LOT$, the expected utility of the optimal $p$-lottery. 
\end{theorem}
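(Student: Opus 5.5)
Our plan is to follow the structure of the \citet{HartlineR08} analysis of RSOL, which underlies Theorem~\ref{thm:random-order-randomized}. We split into two cases according to the structure of the optimal lottery. Let $p^*$ be the optimal lottery price and $k^* = |\{i : v_i > p^*\}|$ the number of winners above it. Throughout we compare against the two benchmarks that RSOL's analysis implicitly uses: $v_1 - v_2$ (relevant when the optimal lottery is essentially concentrated on the top agent) and $\LOT$ itself when $k^*$ is large. As in \citet{HartlineR08}, $\LOT \le \max\{\,c(v_1 - v_2),\ \text{(lottery value with many winners)}\}$ up to constants. So it suffices to show that the max phase recovers a constant fraction of $v_1-v_2$, and that the lottery phase recovers a constant fraction of $\LOT$ in the regime where the lottery has enough winners.

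\textbf{Max phase.} Condition on the event that agent $2$ lands in $\sphase$ and agent $1$ lands in $\mphase$. Then $\pmax = v_2$, nobody in $\mphase$ other than agent $1$ exceeds it, and agent $1$ gets utility $v_1 - v_2$. This event has constant probability: roughly $\frac{1}{2e}\cdot(\frac12-\frac1{2e})$ for large $n$. Small-$n$ edge cases and rounding of $n/2e$ are handled by adjusting the constants. This replaces the coin-flip-weighted Vickrey contribution in RSOL. The secretary-style split at $n/2e$ is what makes this probability a fixed constant.

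\textbf{Lottery phase.} This is the main obstacle. Condition on the max phase not allocating; the sufficient event is that $v_1 \in \lphase$, since then $\pmax < v_1$ may still be accepted by someone. We therefore instead condition directly on the event $\{1 \in \lphase,\ 2\in\sphase\}$ or similar, where the max phase cannot sell: everyone in $\mphase$ is below $v_2=\pmax$. Under this event, $\lphasebar$ and $\lphase$ are a deterministic half/half split. We then reproduce the RSOL sampling argument. We use the Hartline--Roughgarden balanced-sample lemma: for a uniformly random equal split, with constant probability every prefix of the sorted values (above the top one or two) is split within a factor-$c$ balance between the two halves. Because $\lphasebar$ is now a uniformly random half rather than an independent Bernoulli set, we must reprove this lemma for sampling without replacement. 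We would do this via a coupling or a hypergeometric version of their random-walk/ballot argument; concentration for without-replacement sampling is at least as strong, so the constants should survive.

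Given the balance property, the RSOL argument carries over. The sample-optimal price $\plot(\lphasebar)$ achieves on $\lphase$ a constant fraction of its value on $\lphasebar$, which is in turn a constant fraction of $\LOT$. The additional conditioning on $1\in\lphase, 2\in\sphase$ costs only a constant factor. We would argue it only removes the top two agents from consideration, which the HR analysis already excludes by separately charging to $v_1-v_2$.

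Combining the two phases with their constant probabilities against the HR decomposition of $\LOT$, and multiplying the $1/1250$-type constant by the factor $2/e$-ish lost from the fixed phase lengths, yields the claimed $\frac{1}{625e}$. The delicate step is verifying the balanced-split lemma under the deterministic half-split together with the conditioning on the placement of agents $1,2$.
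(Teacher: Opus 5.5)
Your plan has the same skeleton as the paper's proof. It uses a max-phase event worth $v_1-v_2$, and a disjoint lottery-phase event $\{1\in\lphase,\ 2\in\sphase\}\cap E_{RO}$ on which $\pmax=v_2$ blocks every sale in $\mphase$. It then runs the HR transfer argument on balanced halves and combines the two so that the $v_1-v_2$ term absorbs the lottery's loss.

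The one real difference is your max-phase event $\{2\in\sphase,\ 1\in\mphase\}$. The paper instead uses the secretary event $\secevent\cap\{1\in\lphasebar\}$, with probability about $\frac{1}{2e}$. Yours has probability $\frac{1}{2e}\cdot\frac{n/2-n/(2e)}{n-1}\ge\frac{e-1}{4e^2}$ and needs no secretary computation. This suffices, because the last step only needs the coefficient of $v_1-v_2$ to be at least $\frac{1}{625e}$:
$U\ge\frac{e-1}{4e^2}(v_1-v_2)+\frac{1}{625e}\bigl(\LOT-(v_1-v_2)\bigr)\ge\frac{\LOT}{625e}$.
The paper's larger coefficient only pays off in the $\Delta$-dependent bounds used later.

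\textbf{The gap.} What is missing is the one genuinely new step, the balanced-split lemma, which you assert rather than prove. A per-prefix claim that without-replacement sampling concentrates at least as well does not by itself transfer to the simultaneous event over all $i\ge 2$, conditioned on where agents $1,2$ land. Moreover, you need the specific bound $\Pr[E_{RO}\mid n_1=n_2=1]\ge\frac45$ to reach $\frac{1}{625e}$. It combines with $\frac{1}{4e}$ (probability) and $\frac15\cdot\frac1{25}$ (utility transfer), which is where your ``$\frac{1}{1250}\cdot\frac2e$'' really comes from.

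\textbf{How the paper proves it.} \Cref{lem:balanced} uses a hybrid coupling:
\begin{itemize}
\item Track the imbalance $|2r_j-j|$ as a walk reflected at $0$; balancedness fails exactly when the walk crosses $2j/3$.
\item Under the half-split, given the history, the next agent joins the currently smaller side with probability $\frac{n/2-\min(n_j,\bar n_j)}{n-j}\ge\frac12$.
\item The hybrid process $\RW{i}$ uses this rule for the top $i$ agents and fair coins afterwards.
\item Replacing one fair step by a step biased toward $0$ cannot raise the failure probability. This is because the failure probability of the remaining fair walk is monotone in its current state.
\item Chaining $\RW{0},\dots,\RW{n}$ gives the bound of $\frac45$.
\end{itemize}
You also need the easy remark that conditioning on $2\in\sphase$ rather than $2\in\lphasebar$ does not affect $E_{RO}$, since $E_{RO}$ depends only on the unordered halves.

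\textbf{A correction.} The lottery phase does not earn a constant fraction of $\LOT$. Because $\bar n_1=0$, it only earns $\frac{1}{125}\bigl(\LOT-(v_1-v_2)\bigr)$; for example, with $v_1=1$ and all other values near $0$, $\lphasebar$ offers essentially nothing. Also, agent $2$ is not removed from consideration: it stays in $\lphasebar$ with $\bar n_2=1$, and only agent $1$'s increment is lost.
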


Before the theorem's proof, we lay out the underlying idea originating from \citep{HartlineR08}, emphasizing where our analysis inevitably diverges. The key point is that the optimal $p$-lottery may derive its utility either from a large gap between $v_1$ and $v_2$, in which case it allocates to  $v_1$ at a price of $v_2$ (captured by the max phase), or from a lower lottery price that extracts utility from a broader pool of agents (captured by the lottery phase). 

The main technical challenge lies in adapting the lottery-phase analysis. As with RSOL, the argument relies on a balanced property between the first half of the permutation ($\lphasebar$) and the lottery phase ($\lphase$). When the balanced property holds, both sets are representative of the full instance, implying that any posted price yields in expectation approximately the same utility whether evaluated on $\lphasebar$, on $\lphase$, or on the full instance. Using a probabilistic coupling between the random-subset process in RSOL and our random-permutation process, we show that this property holds in $\mdet$ with probability at least as large as in $\mrand$ (\Cref{lem:balanced}). We now formalize with definitions and proceed to the proof of the corresponding balancedness guarantee.

Let $n_i = |\lphase \cap\{1,2,\dots, i\}|$ (and $\bar{n}_i = |\lphasebar \cap\{1,2,\dots, i\}|$) be the number of agents that were placed in $\lphase$ (and $\lphasebar$ respectively) out of the top $i$ valued agents. The \emph{random order balanced property} that we will use is defined as event $E_{RO} = \{\forall i\geq 2: \frac{1}{6}i\leq n_i \leq \frac{5}{6}i\}$, which corresponds to the permutations with the property that for any $i\geq 2$, the fraction of the top $i$ agents that landed in $\lphase$ is bounded in $[\frac{1}{6},\frac{5}{6}]$. Note that we could have equivalently expressed this event with $\bar{n}_i$, as by symmetry, whenever the property holds for $\lphase$, it must also hold for $\lphasebar$. Thus, one intuitive way to think about this event is that it guarantees that \emph{both} sets $\lphase$ and $\lphasebar$ have at least a constant fraction of the top $i$ agents (for any $i>2$). 

We now discuss RSOL's balanced property. First define $s_i = |S \cap \{1,2,\dots, i\}|$ similarly to $n_i$ (where $S$ is a uniformly random subset) and define RSOL's balanced property as event $E_{RS} = \{\forall i\geq 2: \frac{1}{6}i\leq s_i \leq \frac{5}{6}i \;\}$, which corresponds to any subset $S$ with the property that for any $i\geq 2$, the fraction of the top $i$ agents that landed in the subset $S$ is bounded in $[\frac{1}{6},\frac{5}{6}]$. From \citep{HartlineR08}, we get the following lemma about the probability of the balanced property conditioning on agents 1 and 2 being in subset $S$ and subset $\bar{S}$ respectively: 

\begin{lemma}[Adapted from \citep{HartlineR08}]
    \label{lem:balanced-HR}
    Let $S$ be a uniform random subset of $[n]$. Then: 
    \[\Pr_{S} \left[E_{RS}\;\middle| s_1 =  1, s_2 = 1 \right] \geq \frac{4}{5}.\]
\end{lemma}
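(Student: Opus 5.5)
Conditioned on $s_1=1$ and $s_2=1$, the remaining memberships $\mathbb{1}[j\in S]$ for $j\ge 3$ are i.i.d.\ fair coins. The event $E_{RS}$ is about the prefix sums $s_i = 2 + \sum_{j=3}^i X_j$ with $X_j\sim\text{Bernoulli}(1/2)$. My plan is to bound the failure probability $\Pr[\bar E_{RS}\mid s_1=s_2=1]$ by a union bound over $i$ and show it is at most $1/5$.

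\textbf{Step 1: small $i$.} For small $i$ I would check the constraint $\frac{1}{6}i\le s_i\le\frac{5}{6}i$ directly. Note that $s_2=2>\frac{5}{6}\cdot 2$, so literally the upper bound fails at $i=2$. I would therefore read the lemma as in \citep{HartlineR08}, where the conditioning places the top two agents on opposite sides, i.e.\ $s_1=1$ and $s_2 = 1$ means $2\notin S$. Then $s_2=1\in[\frac13,\frac53]$. With this reading, $s_i\in[1,i-1]$ for every $i\ge 2$. So for $i\le 6$ the lower bound $s_i\ge 1\ge i/6$ and the upper bound $s_i\le i-1\le 5i/6$ hold automatically, and no failure is possible there. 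Failures can only occur for $i\ge 7$, and for moderate $i$ the requirement is at least two agents on each side, which I would handle by exact small binomial tail computations.

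\textbf{Step 2: large $i$.} For $i\ge 7$, the count $s_i-1$ is $\text{Bin}(i-2,1/2)$. I would apply a Chernoff or Hoeffding bound: deviating from mean $\approx i/2$ by $i/3$ has probability at most $2\exp(-2(i/3)^2/(i-2))\le 2e^{-2i/9}$. Summing the geometric series from some $i_0$ on gives a small quantity. For intermediate $i$ between $7$ and $i_0$, I would use exact binomial tails, which are tiny; e.g.\ at $i=7$, failure means all of agents $3,\dots,7$ fall on one side, probability $2\cdot 2^{-5}$. Adding these contributions should stay below $1/5$.

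\textbf{Main obstacle.} The difficulty is purely numerical. The crude Hoeffding sum $\sum_{i\ge 7}2e^{-2i/9}$ is about $2e^{-14/9}/(1-e^{-2/9})\approx 2.1$, which is far too loose. So the main work is a careful exact or tighter bound for the early range of $i$, say up to $30$ to $40$. Rather than a union bound over each $i$ there, I would exploit monotonicity: violations of the lower bound at consecutive $i$ are nested, since failing at $i$ requires few successes among earlier agents. I would therefore bound failures only at the breakpoints where $\lceil i/6\rceil$ increments. This reduces the union bound to indices $i=6k+1$, each with exact tail $\Pr[\text{Bin}(6k-1,1/2)\le k-1]$. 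That sum, doubled for symmetry, decays rapidly and I expect it to come to well under $1/5$, with Chernoff covering the tail beyond $k\approx 5$.
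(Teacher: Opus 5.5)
The paper gives no proof of this lemma: it cites \citep{HartlineR08} and proves only the coupling step (Lemma~\ref{lem:balanced}). Your proposal is therefore a correct, self-contained alternative to that citation.

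\textbf{A correction to your setup.} No reinterpretation is needed. Here $s_i=|S\cap\{1,\dots,i\}|$ is a count, not an indicator, so $s_1=1$ and $s_2=1$ literally mean $1\in S$ and $2\notin S$. Your opening formula should read $s_i=1+\sum_{j=3}^{i}X_j$, which is what your Step~2 actually uses. There is also no failure at $i=2$, since $s_2=1\in[\tfrac13,\tfrac53]$.

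\textbf{The breakpoint reduction closes the argument.} For $i\in[6k+1:6k+6]$, a lower violation means $s_i\le k$. Because $s_i$ is nondecreasing, this forces $s_{6k+1}\le k$. Upper violations are lower violations for $\bar S$, whose conditional law is the same. The failure probability is therefore at most
\[
2\sum_{k\ge 1}\Pr\bigl[\mathrm{Bin}(6k-1,\tfrac12)\le k-1\bigr]
=2\bigl(2^{-5}+12\cdot 2^{-11}+154\cdot 2^{-17}+2048\cdot 2^{-23}+\cdots\bigr)\approx 0.077 .
\]
Hoeffding bounds the $k$-th term by $e^{-4k/3}$, which controls the tail. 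For finite $n$ the sum simply truncates at $6k+1\le n$, which only helps. So the balanced event has conditional probability above $0.92$, comfortably more than $4/5$. You are right that the naive union bound over every $i$ is hopeless (about $2.1$), so the block reduction is the essential step, not a refinement.

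\textbf{Comparison with the citation.} Citing the source buys brevity, but it leaves the reader to match the source's conditioning convention to the paper's $s_1=s_2=1$. That matching is exactly where your own initial slip occurred. Your route costs a short computation, but it is self-contained, makes the conditioning explicit, and gives a better constant.
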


For the analysis of the mechanism $\mdet$, we require a similar claim for the random partition that it employs. Specifically, when the sets are partitioned into the first and second halves of the random arrival, as opposed to uniformly at random, the probability of this event only increases.

\begin{restatable}{lemma}{lembalanced}
\label{lem:balanced}
    For a uniformly random order permutation $\sigma$ and a random subset $S$ it holds that:
$$\Pr_{\sigma} \left[E_{RO} \;\middle| n_1 = 1, n_2 = 1 \right] \geq \Pr_{S} \left[E_{RS}\;\middle| s_1 =  1, s_2 = 1 \right].$$
    \end{restatable}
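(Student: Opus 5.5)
We compare the two processes through the sequence of membership indicators of the agents ordered by value. Under the random subset, conditioned on $s_1=s_2=1$, the indicators $X_3,\dots,X_n$ (with $X_i=\mathbb{1}[i\in S]$) are i.i.d.\ Bernoulli$(1/2)$. Under the random permutation, conditioned on $n_1=n_2=1$, the set $\lphase\setminus\{1,2\}$ is a uniformly random subset of $\{3,\dots,n\}$ of size exactly $n/2-2$. Hence its indicators $Y_3,\dots,Y_n$ form a sampling-without-replacement sequence. The event $E_{RO}$ (resp.\ $E_{RS}$) is the event that the partial-sum path $k\mapsto 2+\sum_{j=3}^k Y_j$ (resp.\ $X_j$) stays in the strip $[k/6,\,5k/6]$ for every $k\ge 2$. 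The lemma therefore reduces to the following claim: a hypergeometric walk with total fixed at $n/2$ stays in the strip at least as often as a binomial walk.

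\textbf{Coupling step.} Condition the binomial process on its total $T=\sum_{j\ge 3}X_j$. Given $T=m$, the set of included agents is a uniformly random $m$-subset of $\{3,\dots,n\}$. So $\Pr[E_{RS}\mid s_1=s_2=1]=\sum_m \Pr[T=m]\,f(m)$, where $f(m)$ denotes the probability that a uniformly random $m$-subset yields a balanced path. The permutation process equals the single term $f(n/2-2)$. It then suffices to show that $f$ is maximized at $m=n/2-2$, or, more weakly, that $f(n/2-2)\ge \sum_m \Pr[T=m]f(m)$.

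\textbf{Unimodality of $f$.} I would show that $f(m)$ is nondecreasing for $m\le n/2-2$ and, by symmetry, nonincreasing above. Complementation (swapping $S$ and $\bar S$) maps the strip $[k/6,5k/6]$ to itself, and the starting points $n_1=n_2=1$ map to $\bar n_1=\bar n_2=1$ symmetrically. This gives $f(m)=f(n-4-m)$ after accounting for the two fixed agents. For monotonicity I would couple an $m$-subset with an $(m+1)$-subset by adding one uniformly random non-included element. Adding an element can break only the upper constraint $n_k\le 5k/6$ and can only help the lower one. So the argument needs a finer step: when $m<n/2-2$, the new element can be placed so that it fixes more lower-violations than the upper-violations it creates. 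This can be done via a reflection or swap argument on paths, pairing bad paths of size $m+1$ injectively with bad paths of size $m$.

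\textbf{Main obstacle.} This monotonicity/unimodality step is the one I expect to be hardest, because the strip constraint is two-sided. If the direct injection fails, my fallback is to avoid unimodality entirely. The balanced event is a two-sided boundary event for a path, and the hypergeometric walk is a martingale-like bridge that is more concentrated than the binomial walk. Using the convex-order domination of sampling without replacement (Hoeffding's theorem) at each prefix alone is not enough for the union over prefixes. Instead I would redo the \citet{HartlineR08} union bound directly for $Y$. Their bound on $\Pr[\bar E_{RS}]$ sums, over $k$, Chernoff tails of $\sum_{j\le k}X_j$, and Hoeffding's result shows that each prefix tail of the hypergeometric sum is bounded by the corresponding binomial Chernoff bound. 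Their resulting numeric bound then applies verbatim to $E_{RO}$. This gives $\Pr[E_{RO}\mid n_1=n_2=1]\ge 4/5$, which is what is actually used downstream, even if the literal inequality between the two probabilities remains unproven.
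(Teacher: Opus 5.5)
\textbf{The gap.} Your reduction is fine as far as it goes. Conditioning the subset process on $T=\sum_{j\ge 3}X_j$ does give $\Pr[E_{RS}\mid s_1=s_2=1]=\sum_m \Pr[T=m]\,f(m)$, and the permutation side is the single term $f(m^\ast)$.

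First fix the setup. $n_i$ is a count, so $n_1=n_2=1$ means agent $1\in\lphase$ and agent $2\in\lphasebar$. The path therefore starts at height $1$ at $k=2$, and $\lphase\setminus\{1\}$ is a uniform subset of $\{3,\dots,n\}$ of size $m^\ast=n/2-1$. Under your reading $n_2=2>5/3$, so both conditional probabilities are $0$. With the correct reading, $f(m)=f(n-2-m)$ and $T\sim\mathrm{Bin}(n-2,1/2)$ are both centered exactly at $m^\ast$.

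The real problem is that the inequality $f(m^\ast)\ge\sum_m\Pr[T=m]f(m)$, which is the whole content of the lemma, is never proved. You replace it by the stronger claim that $f$ is unimodal, and support that only by asserting that a reflection/swap injection exists.
\begin{itemize}
\item Adding a random unused element only shows that lower-barrier violations become rarer and upper-barrier violations more common. By complementation $U(m)=L(n-2-m)$, so monotonicity of $f$ needs a quantitative inequality such as $L(m)-L(m+1)\ge L(n-3-m)-L(n-2-m)$, uniformly over all prefixes, plus control of paths that violate both barriers. Nothing in the proposal supplies this.
\item Read literally, an injection from bad $(m+1)$-subsets into bad $m$-subsets would force the \emph{number} of bad subsets to decrease toward the center. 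That is false near $m^\ast$ for large $n$. The symmetry $f(m^\ast-1)=f(m^\ast+1)$ kills the first-order term, so $f(m^\ast)-f(m^\ast-1)=o(1/n)$. Meanwhile $\binom{n-2}{m^\ast}/\binom{n-2}{m^\ast-1}=1+\Theta(1/n)$, and the failure probability is bounded below by a constant: agents $3,\dots,7$ all joining agent $1$'s side already violates the barrier at $k=7$.
\item Your fallback does not close the gap. At best it gives $\Pr[E_{RO}\mid n_1=n_2=1]\ge 4/5$, which is a different statement from the comparison claimed. It also rests on an unchecked assumption that the cited bound is a union of per-prefix MGF/Chernoff bounds.
\end{itemize}

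\textbf{The missing idea.} Do not decompose by the total; compare the two processes one draw at a time, as the paper does. The paper defines hybrids $\RW{i}$ in which the top $i$ agents are assigned by the permutation's sequential rule and the rest by fair coins. It tracks the imbalance $|2r_j-j|$ as a walk reflected at $0$ that must stay at most $2j/3$. Two facts suffice:
\begin{itemize}
\item $\Pr[N_{i+1}=1\mid H_i]=(n/2-n_i)/(n-i)$ always favors the currently smaller side.
\item Under a fair-coin continuation, the probability of eventually crossing the barrier is nondecreasing in the current imbalance, by a monotone coupling of the reflected walks.
\end{itemize}
Hence, for every history $H_i$, switching draw $i+1$ from a fair coin to the permutation rule can only lower the failure probability. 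Chaining over $i$ gives the lemma. This argument uses only that the permutation's target is exactly half, so it never needs any information about $f(m)$ for $m\neq m^\ast$ — which is exactly the information your route requires and cannot supply.
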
  

    We will prove this lemma via a coupling argument that compares the two partitioning procedures through a family of intermediate random processes. 
    To motivate the coupling, we first compare the random subset event $E_{RS}$ and the random permutation event $E_{RO}$ at a high level. Putting randomness aside for a moment, both events impose the same balancedness condition on a partition of the agents: for every $i$, the top-$i$ agents should be split so that neither side contains too small or too large a fraction (e.g., between $i/6$ and $5i/6$). Then, one way to reason about balancedness is to track the (absolute) difference between the sizes of the two sets as each element $i$ is placed in the partition. This induces a stochastic process that can be viewed as a random walk on the non-negative integers (initialized at 0), where a $+1$ step corresponds to placing the next element on the currently larger set, and a $-1$ step corresponds to placing it on the smaller set.  Under this view, balancedness corresponds to the event that this walk never crosses a time-varying barrier (which we will later compute to be $2i/3$) that captures precisely when the imbalance between the partitioning sets becomes too large.

    To account for the randomness associated with the partitioning process, we simply need to specify the random walk transition probabilities for each partitioning process. In the random-subset process ($\mrand$), each step of the walk is an independent move of $\pm1$ with probability $1/2$. In contrast, in the random order process ($\mdet$), the transition probabilities are history-dependent: at any point, the next assignment is biased toward the smaller set. Now notice that if the two processes are at the same state after some number of steps, the random-order walk is more likely to move toward 0 and less likely to move toward the imbalance barrier. This single-step comparison captures the key intuition, but is not sufficient on its own to compare the full processes because of the history-dependence. This is precisely why we introduce intermediate processes, allowing us to compare the two processes at one step at a time (while sequentially conditioning on history), ultimately showing that balancedness under random order is more likely than under random subsets.

    The above may seem redundant given that we only wish to prove a lower bound for the random order process and the event $E_{RO}$. We emphasize, however, that reasoning about $E_{RO}$ directly is considerably more difficult. The main hurdle is that the balancedness condition must hold for the top $i$ agents for \emph{all} $i\geq 2$---it is not a single concentration bound at one fixed $i$, and further, a clean induction proof is prevented by obvious dependencies between the condition for varying $i$. We proceed now with the proof of~\Cref{lem:balanced}.

    \begin{proof}[Proof of \Cref{lem:balanced}]
    
    Formally, we define two stochastic processes---one that partitions the instance by generating a random subset $S$, and another that generates a random permutation of the instance $\sigma$ and partitions it in the middle. To simplify the comparison we overload notation and define $N = \lphase =\sigma[\frac{n}{2}+1:n]$ and $\bar{N} = \lphasebar = \sigma[1:\frac{n}{2}]$. 
    \begin{enumerate}
        \item For the random subset, we define independent indicator random variables $S_i = \mathbbm{1}\left\{i \in S\right\}$, and note that $\Pr [S_i =1] = \frac{1}{2}$. The realizations of these random variables define the random subset $S$. We define as $H_i$ the realized partition of draws $1,2,\dots,i$ (the partition history).
        \item For the random permutation, we consider the following procedure. We have $n$ empty spots on $\sigma$ to be filled. For every element $i \in \{1,2,\dots, n\}$, sequentially place it in any of the remaining $n+1-i$ empty spots uniformly at random. Define the indicator random variable $N_i = \mathbbm{1}\{i \in N\}$, the realized partition $H_i$ (partition history up to $i$) as above, and notice that $\Pr [N_{i+1} =1 | H_{i}] = \frac{\frac{n}{2}-n_{i}}{n-i}$. 
    \end{enumerate} 

    To enable a coupling argument, we define a series of stochastic processes $\RW{i}$, for $i=0,1,\dots,n$, intermediate between the two partitioning procedures, such that $\RW{0}$ generates the random-subset partition $(S,\bar S)$ and $\RW{n}$ generates the partition $(N, \bar{N})$ induced by splitting a uniformly random permutation in half. The process $\RW{i}$ begins by assigning the top $i$ agents according to the permutation-based partitioning rule, and the remaining $n-i$ agents according to the random-subset rule. Across processes $\RW{i}$, as $i$ increases, progressively more of the top agents are assigned according to the permutation-based rule and fewer according to the random-subset rule. This interpolation allows us to compare the two partitioning procedures draw by draw.  
    
    Formally, $\RW{i} = (R^{(i)}_1,\ldots,R^{(i)}_n)$ is defined by setting $R^{(i)}_j = N_j$ for $j\leq i$ and $R^{(i)}_j = S_j$ for $j>i$, that is, $\RW{i} = (N^{(i)}_1,\ldots,N^{(i)}_i, S^{(i)}_{i+1}, \ldots, S^{(i)}_n)$. Moreover, we denote $(R^{(i)},\bar{R}^{(i)})$ to be the partition induced by process $\RW{i}$, and notice that $(R^{(0)}, \bar{R}^{(0)}) = (S, \bar{S})$ and $(R^{(n)},\bar{R}^{(n)})=(N,\bar{N})$.
    Thus, stochastic process $\RW{i}$ generates partition $(R^{(i)},\bar{R}^{(i)})$,
    where $R^{(i)}$ is a combination of the ``left half'' $N$ (for the first $i$ agents) and the ``first set'' $S$ (for the remaining $n-i$ agents), and $\bar R^{(i)}$ the complement of both.
    Let $H_k$ denote a realized partition history up to draw $k$, that is, the partition of the top $k$ agents. Notice that we suppress the dependence of this notation on a particular process $\RW{i}$ because our comparison conditions on the same realized partial history of partition $H_k$ across different processes.
     
    For each process $\RW{i}$, let $\rw{i}_j = \left|R^{(i)} \cap \{1,2,\dots,j\}\right|$ denote the number of the top $j$ agents assigned to side $R^{(i)}$ of the partition. We then define the auxiliary balancedness event for process $\RW{i}$ by $E_{\RW{i}}=\left\{\forall j\ge 2:\; \frac{1}{6}j \le \rw{i}_j  \le \frac{5}{6}j \right\}$. Under this notation, the inequality of \Cref{lem:balanced} can be rewritten as
    \begin{align}
    \Pr_\sigma \left[E_{RO} \;\middle|\; n_1 = 1, n_2 = 1 \right] & \geq \Pr_S \left[E_{RS}\;\middle| \; s_1 =  1, s_2 = 1 \right] \Leftrightarrow \notag\\
    \Pr \left[\bar E_{\RW{0}} \,\middle|\, \rw{0}_1 =1,\ \rw{0}_2 =1\right]
    & \geq
    \Pr \left[\bar E_{\RW{n}} \,\middle|\, \rw{n}_1 =1,\ \rw{n}_2 =1\right], \label{eq:balancedness}
    \end{align}

    where the randomness of the second inequality is over the corresponding random processes $\RW{0}$ and $\RW{n}$ respectively and the bar notation ($\bar E_{\RW{i}}$) denotes the complement of an event ($E_{\RW{i}}$).

Let $\initevent$ denote the realized history at time $2$ in which the top two agents are partitioned into separate sets in the prescribed way; then the two conditioning events $\{\rw{0}_1 =1,\ \rw{0}_2 =1\}$ and $\{\rw{n}_1 =1,\ \rw{n}_2 =1\}$ both correspond to $\initevent$.

    To prove the inequality in~\cref{eq:balancedness}, we establish the following intermediate claim, which compares the probabilities of $\bar{E}_{\RW{i}}$ and $\bar{E}_{\RW{i+1}}$ after conditioning on the same realized history up to time $i$. 

\begin{restatable}{lemma}{lemNotBalancedi}
\label{lem:not-balanced-i}
    For all $1\le i\le n-1$, and any history $H_i$ we have 
  \[\Pr \left[\bar{E}_{\RW{i}} | H_i\right] \geq \Pr \left[\bar{E}_{\RW{i+1}} | H_i\right].\]
    That is, assuming a fixed history $H_i$, the probability of the random process $\RW{i}$ being unbalanced weakly decreases in $i$.
    \end{restatable}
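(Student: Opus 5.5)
The plan is to exploit the fact that $\RW{i}$ and $\RW{i+1}$ differ in exactly one coordinate. Both processes assign agents $1,\dots,i$ by the permutation rule, which is fixed by conditioning on $H_i$. Both assign agents $i+2,\dots,n$ by independent fair coins. They differ only in how agent $i+1$ is assigned: $\RW{i}$ uses a fair coin $S_{i+1}$, while $\RW{i+1}$ uses $N_{i+1}$ with $\Pr[N_{i+1}=1\mid H_i]=\frac{n/2-n_i}{n-i}$. So I will define, for every history $H_{i+1}$ of the top $i+1$ agents,
\[
f(H_{i+1}) := \Pr\bigl[\text{unbalanced} \mid H_{i+1}\bigr],
\]
where the remaining agents are placed by independent fair coins. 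This function is the same for both processes. Let $H^{+}$ and $H^{-}$ be the two possible extensions of $H_i$: $H^+$ puts agent $i+1$ on the currently larger side and $H^-$ puts it on the smaller side. Then
\[
\Pr[\bar E_{\RW{i}}\mid H_i]=\tfrac12 f(H^+)+\tfrac12 f(H^-),
\qquad
\Pr[\bar E_{\RW{i+1}}\mid H_i]=q f(H^-)+(1-q) f(H^+),
\]
where $q$ is the probability that the permutation rule puts $i+1$ on the smaller side. The lemma then follows from two facts: (a) $q\ge \tfrac12$, and (b) $f(H^+)\ge f(H^-)$.

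Fact (a) is a direct calculation. The probability $\frac{n/2-n_i}{n-i}$ is at least $\tfrac12$ exactly when $n_i\le i/2$, that is, when the side $N$ is the smaller one; symmetrically for $\bar N$. When $n_i=i/2$ the two sides are equal, $H^\pm$ are mirror images, $f(H^+)=f(H^-)$ by the symmetry of the balancedness condition, and equality holds. I should also dispose of the degenerate case in which $H_i$ has already violated the condition for some $j\le i$: then $f\equiv 1$ on both extensions and both sides equal $1$.

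Fact (b) is the main step. By the symmetry $R\leftrightarrow\bar R$ of the condition $\tfrac16 j\le r_j\le\tfrac56 j$, $f$ depends only on whether a violation has already occurred and on the current imbalance $d_{i+1}=|2r_{i+1}-(i+1)|$. Under fair coins, $(d_j)_{j\ge i+1}$ is a reflected simple random walk: it steps $\pm1$ with probability $\tfrac12$ each when $d>0$, and steps to $1$ when $d=0$. Unbalancedness is the event that $d_j>\tfrac23 j$ for some $j$. The history $H^+$ gives imbalance $d+2$ if $H^-$ gives $d$, with the same parity. I will couple two reflected walks started at $d$ and $d+2$ at the same time so that the upper walk stays weakly above the lower one forever. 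While both are positive, they use the same $\pm1$ step. If the lower walk is at $0$, it steps to $1$ while the upper one (at $\ge 2$) steps $\pm1$, so the gap either stays $\ge 2$ or closes to $0$. Because the parities agree, the walks can never cross without first meeting, and once they meet they move together. Under this coupling, every crossing of the increasing barrier $\tfrac23 j$ by the lower walk is also a crossing by the upper walk, so $f(H^+)\ge f(H^-)$. It then follows that $q f(H^-)+(1-q)f(H^+)\le \tfrac12 f(H^-)+\tfrac12 f(H^+)$.

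I expect the main obstacle to be the bookkeeping that makes $f$ well defined as a function of the history alone and identical across $\RW{i}$ and $\RW{i+1}$. This requires checking that the continuation for agents $j>i+1$ really is the independent fair-coin rule in both processes. It also requires handling the reflection at $0$ carefully in the monotone coupling, since that is where a naive identical-increments coupling of the signed walks could let them cross. Once the lemma is proved, chaining it over $i$ from $2$ to $n-1$ under the common conditioning $\initevent$ yields \eqref{eq:balancedness}.
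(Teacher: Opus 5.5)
Your proposal is correct and takes essentially the same route as the paper. Both arguments condition on $H_i$, note that $\RW{i}$ and $\RW{i+1}$ differ only in how agent $i+1$ is placed, and combine two facts: the permutation rule biases that placement toward the smaller side ($\max\{p_i,1-p_i\}\ge\frac12$), and the probability of imbalance is monotone in the current imbalance. The one difference is that you prove this monotonicity with an explicit monotone coupling of reflected walks started at $d$ and $d+2$, whereas the paper only asserts it (``the probability of becoming unbalanced is larger the closer you are to the boundary''), so your version fills that small gap.
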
 

    We defer the proof of this claim and first show how it implies \Cref{lem:balanced}. 
    Let $\B{i}{t}=\{\forall \; 2\leq j \leq t : \frac{1}{6}j \leq \rw{i}_j \leq \frac{5}{6}j \}$ be the event that the partition induced by $\RW{i}$ is balanced up to element $t$. For $i \geq 3$
        \begin{align}
        \Pr \left[\bar{E}_{\RW{i}} \middle| \initevent \right] 
        & = \Pr \left[\bar{E}_{\RW{i}} \middle| \initevent \cap \B{i}{i-1}\right]  \Pr \left[\B{i}{i-1} \middle | \initevent \right] + \Pr \left[\bar{B}^{(i)}_{i-1}\middle | \initevent \right] \notag \\
        & \leq \Pr \left[\bar{E}_{\RW{i-1}} \middle| \initevent \cap \B{i-1}{i-1} \right]  \Pr \left[\B{i-1}{i-1} \middle| \initevent\right] + \Pr \left[\bar{B}^{(i-1)}_{i-1} \middle | \initevent  \right] \notag \\
        & = \Pr \left[\bar{E}_{\RW{i-1}} \middle| \initevent \right] \label{eq:Ri unbalanced},
    \end{align}
    where the equalities follow by observing that if a random process is unbalanced, then either it only becomes unbalanced after assigning element $i-1$, or was already unbalanced before element $i$, and the inequality follows by applying \Cref{lem:not-balanced-i} pointwise for all histories in $\initevent \cap \B{i}{i-1}$. 
    
    Recursively applying \cref{eq:Ri unbalanced} for $i=3,\ldots,n$, we have: 
    \begin{align*}
    \Pr [\bar{E}_{\RW{0}}  | \initevent] = \Pr [\bar{E}_{\RW{1}}  | \initevent] = \Pr [\bar{E}_{\RW{2}}  | \initevent] \geq \Pr [\bar{E}_{\RW{3}}  | \initevent] \geq \dots \geq \Pr [\bar{E}_{\RW{n}} | \initevent],
    \end{align*}
    where the first two equalities are because processes $\RW{0}$, $\RW{1}$, and $\RW{2}$ coincide if you fix the first two draws, thus proving the inequality in \Cref{eq:balancedness} and in turn the lemma.
    \end{proof}
    
    We now prove \Cref{lem:not-balanced-i}.

\begin{proof}[Proof of Lemma \ref{lem:not-balanced-i}]

    If $H_i$ is such that the balanced property doesn't hold for some $i'\leq i$, then the claim trivially holds since both probabilities are $1$. Suppose from now on that up to $i$, the balanced property holds.

    Observe that under event $\bar{E}_{\RW{i}}$, there exists some $t\geq 2$ where the balancedness becomes violated in process $\RW{i}$, i.e., $\text{min}(\rw{i}_t - \frac{t}{6},\frac{5t}{6} - \rw{i}_t) < 0$. We can equivalently measure the quantity $\text{min}(\frac{5t}{6} - \bar{r}^{(i)}_{t}, \frac{5t}{6}-\rw{i}_t)$ to know how close the process $\RW{i}$ is to violating the upper boundary of balancedness (by either of the two sides of the partition), where $\bar{r}^{(i)}_t = t - \rw{i}_t$. 
    
    In particular, to keep track of this quantity, we consider a random walk on the non-negative integers $\{0,1,2\dots\}$. 
    The state of the random walk at time $j$ (defined as $\text{state}[j]$) will capture how close the larger side of the process is to violating the upper boundary of balancedness. The walk moves at each time exactly one step (either to the left or to the right), with a right step at time $j$ corresponding to adding the $j$-th element to the larger side of the partition. The walk is initialized at $0$, meaning $\text{state}[0]=0$. We define the transition probabilities of this random walk appropriately for process $\RW{i}$, that is:

    \begin{itemize}
        \item If $\text{state}[j] = 0$ then $\text{state}[j+1] = 1$ with probability $1$.\footnote{When both sides are equal, any element added will be to the resulting larger side.}
        \item If $\text{state}[j] \neq 0$, then: 
        \begin{itemize}
            \item for steps at time $j < i$, the probability of moving to the right is $\min\left\{ p_j, 1 -p_j\right\}$, where $p_j := \Pr [N_{j+1}=1|H_{j}]$.
            \item for steps at time $j \ge i$, the probability of moving to the right is $\frac{1}{2}$.
        \end{itemize}
    \end{itemize}  

    Notice now for $\RW{i}$ that, assuming a given history $H_j$, the equivalent random walk is at $\text{state}[j] = |2\rw{i}_j-j|$ (which is exactly the difference between the two sides of the partition). Finally, note that reaching the boundary of balancedness (i.e., $\frac{5t}{6}$) corresponds to the state of $\frac{2t}{3}$ under the random walk model and ${E}_{\RW{i}} \equiv \{\forall j \geq 2: \text{state}[j]\leq \frac{2j}{3}\}$.
        
    Returning to the proof, suppose now that $H_i$ corresponds to the random walk having state[$i$]$=k$, and recall that $k\geq 0$. If $k=0$, then the claim trivially holds as the processes coincide for draws $i+2,\dots,n$ (by definition) and for draw $i+1$ as well (the sides are equal up to $i$ so we have $\Pr(N_{i+1}=1 \mid H_i ) = 1/2$, meaning that $N_{i+1}\equiv S_{i+1}$). 
    If $k> 0$, we have:
    \begin{align}
        \Pr \left[\bar{E}_{\RW{i}} | \text{ state}[i]=k\right]
        & = \Pr \left[\bar{E}_{\RW{i}} | \text{ state}[i+1]=k-1\right] \cdot \frac{1}{2} \notag \\
        & + \Pr \left[\bar{E}_{\RW{i}} | \text{ state}[i+1]=k+1 \right]\cdot \frac{1}{2} \notag \\ 
        & \geq  \Pr \left[\bar{E}_{\RW{i}} | \text{ state}[i+1]=k-1\right] \cdot \max\left\{ p_i, 1 -p_i\right\} \notag \\
        & + \Pr \left[\bar{E}_{\RW{i}} | \text{ state}[i+1]=k+1 \right] \cdot \min\left\{ p_i, 1 -p_i\right\} \notag \\
        & = \Pr \left[
        \bar{E}_{\RW{i+1}} | \text{ state}[i]=k\right], \label{eq:R_i conditioned on state}
    \end{align}
where the inequality is because:
    
    \begin{enumerate}
        \item $\Pr \left[\bar{E}_{\RW{i}} | \text{ state}[i+1]=k+1 \right] \geq \Pr \left[\bar{E}_{\RW{i}} | \text{ state}[i+1]=k-1\right]$  because the probability of becoming unbalanced is larger the closer you are to the boundary (to the right). 
        \item Trivially, $\max\left\{ p_i, 1 -p_i\right\} > \frac{1}{2} > \min\left\{ p_i, 1 -p_i\right\}$.
        \item Moving from the fair coin $S_{i+1}$ to the biased coin $N_{i+1}$ shifts probability mass toward the $k-1$ state, which is precisely the state with the lower conditional probability of imbalance. 
    \end{enumerate}
    Thus, for any history $H_i$, we have argued:
    \[
    \Pr \left[\bar{E}_{\RW{i}} | H_i\right] \geq  \Pr \left[\bar{E}_{\RW{i+1}} | H_i\right].
    \]
    \end{proof}

    We are now ready to prove \Cref{thm:random-order-deterministic}.

\begin{proof}[Proof of \Cref{thm:random-order-deterministic}]

To prove the approximation guarantee of the mechanism, we need to analyze the guarantees of each contributing phase of the mechanism: $\mphase$ and $\lphase$. Note that for the lottery phase to contribute, it must be the case that the item was not sold during the max phase $\mphase$, so we must condition on this in our analysis.  

\textbf{Lottery phase analysis:} 
We begin by analyzing the guarantees of the \emph{lottery} phase $\lphase$. We notice first that in any
ordering $\sigma$ that has agent $1$ in the third (lottery) phase $\lphase$ and agent 2 in the first (sample) phase $\sphase$, mechanism $\mdet$ is \emph{guaranteed} to reach the lottery phase $\lphase$---no one can afford a price of $\pmax=v_2$ except agent 1.
{Hence, to bound the contribution of the lottery phase, we condition on this event along with the event that the balanced property holds $E_{RO}$,} defining $\lotalloc = E_{RO} \cap \{1 \in \lphase \text{ and } 2 \in \sphase\}$. In the following lemma, we relate the expected utility of $\mdet$  conditioned on $\lotalloc$ to $\LOT$ (where expectation and probabilities are over the uniformly random permutation $\sigma$).

\begin{restatable}{lemma}{balancedCombined}
    \label{lem:balanced-combined}
    Conditioning on event $\lotalloc$, the expected utility of $\mdet$ is at least $\frac{\LOT - (v_1 - v_2)}{125}$.
\end{restatable}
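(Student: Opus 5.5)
Conditioned on $\lotalloc$, the mechanism reaches the lottery phase with certainty. Agent $2 \in \sphase$ makes $\pmax = v_2$, and agent $1\in\lphase$, so nobody in $\mphase$ accepts. Its utility is therefore exactly that of a $\plot$-lottery run on $\lphase$, with $\plot = \plot(\lphasebar)$. By the equivalence between random-order posted prices and offline $p$-lotteries, this is
\[
U_{\lphase}(\plot) \;=\; \frac{1}{|\{i\in\lphase : v_i>\plot\}|}\sum_{i\in\lphase : v_i>\plot}(v_i-\plot).
\]
The order within $\lphase$ is uniform even after conditioning on $\lotalloc$, because $\lotalloc$ depends only on the partition. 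I will follow the RSOL lottery analysis of \citet{HartlineR08} essentially verbatim, with $(S,\bar S)$ replaced by $(\lphase,\lphasebar)$. Their analysis uses only three facts: agent $1$ lies in the execution side, agent $2$ lies in the sample side, and the balanced property holds. Under $\lotalloc$ all three hold for our partition. This is deterministic given the partition, so no further probabilistic input is needed; \Cref{lem:balanced} is used elsewhere to lower bound $\Pr[\lotalloc]$.

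\textbf{Step 1 (discard agent $1$).} Let $p^*$ be the optimal lottery price. Because agent $1$ contributes at most $v_1-v_2$ above what agent $2$ contributes, removing agent $1$ and replacing $v_1$ by $v_2$ gives an instance whose optimal lottery value is at least $\LOT-(v_1-v_2)$, up to a constant factor. Write $G'$ for this reduced benchmark. From here on I argue only with agents $i\ge 2$, and every guarantee is stated against $G'$.

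\textbf{Step 2 (balanced sides are representative).} Under $E_{RO}$, every top-$i$ prefix with $i\ge2$ splits in ratio within $[1/6,5/6]$. Writing a lottery's value at price $p$ as an average of $(v_i-p)$ over the top-$k(p)$ agents and using Abel summation over prefixes, I will show that for every price $p$ the value on $\lphasebar$ and the value on $\lphase$ are each within a constant factor (powers of $5$, from $\frac{5/6}{1/6}$) of the full-instance value on agents $\ge2$.

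The one-sided directions needed are these:
\begin{itemize}
\item $U_{\lphasebar}(p^*)\ge c\,G'$, so the price chosen on the sample side is good there.
\item $U_{\lphase}(p)\ge c'\,U_{\lphasebar}(p)$ for every $p$, so any price good on the sample side transfers to the execution side.
\end{itemize}
Chaining them gives $U_{\lphase}(\plot)\ge c' U_{\lphasebar}(\plot)\ge c' U_{\lphasebar}(p^*)\ge c'c\, G'$. The constants $c,c'$ should multiply to $1/125$.

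\textbf{Main obstacle.} The hard step is the transfer inequality in Step 2 when the set of agents above $p$ is tiny. There, balancedness can fail in the sense that only $i=1$ lies above $p$ on one side. This is exactly why agent $1$ is excised in Step 1 and why $E_{RO}$ is stated only for $i\ge2$. I would handle it the way \citet{HartlineR08} do: the threshold set on the sample side is nonempty, and its top-$k$ counts on the execution side are bounded between $k/5$ and $5k$. Plugging these count bounds into the averaged utilities should yield the factor $5^3=125$.
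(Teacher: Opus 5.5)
Your plan is the paper's proof. Under $\lotalloc$ the mechanism reaches the lottery phase and earns $U(\lphase,\plot)$. The Abel-summed form with $n_i\in[\bar n_i/5,\,5\bar n_i]$ gives $U(\lphase,p)\ge\frac1{25}U(\lphasebar,p)$ for every $p$ (\Cref{lem:balanced-Sample-to-N}), and optimality of $\plot$ on $\lphasebar$ together with $\bar n_i\ge i/6$ and $\bar n_{l^*}\le 5l^*/6$ gives $U(\lphasebar,\plot)\ge U(\lphasebar,p^*)\ge\frac15\bigl(\LOT-(v_1-v_2)\bigr)$ (\Cref{lem:balanced-to-OPT}), so $c'=\frac1{25}$ and $c=\frac15$.

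Two small corrections. First, Step~1 costs no constant, because lowering $v_1$ to $v_2$ at the fixed price $p^*=v_{l^*+1}$ gives exactly $\LOT-(v_1-v_2)/l^*$, so your first bullet should bound $U(\lphasebar,p^*)$ against that quantity rather than against the reduced optimum $G'$; as written it fails when $p^*=v_2$, where the left side is $0$ but $G'\ge v_2-v_3>0$. Second, the small-$k$ case you flag as the main obstacle is harmless: agent $1$ lies on the $\lphase$ side and $\bar n_1=0$, so the transfer inequality is a termwise comparison for every $k$.
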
 

The proof parallels \citep{HartlineR08} and can be found in Appendix~\ref{app:hr-analog}. At a high level, the lemma follows because the balanced property $E_{RO}$ imposes that both sides of the partition are representative of the full set of agents in the following sense: any price posted in one part of the partition yields approximately the same guarantees when posted to the other part, as well as to the full instance (\Cref{lem:balanced-Sample-to-N} in Appendix~\ref{app:hr-analog}). Thus, the optimal $p$-lottery of one side of the partition maintains its guarantees when posted on the other side, while also remaining competitive against the global optimum (\Cref{lem:balanced-to-OPT} in Appendix~\ref{app:hr-analog}). 

    To wrap things up, we quantify the probability of this desirable event $\lotalloc$. We first lower-bound the auxiliary probability of $\Pr[1 \in \lphase \text{ and }  2 \in \sphase] = \frac{1}{2}\cdot \frac{\frac{n}{2e}}{n-1} \geq \frac{1}{4e}$. Finally, we lower-bound the probability of event $\lotalloc$:
    \begin{equation}
        \label{ineq:mdet_lot_prob}
        \begin{aligned}            
        \Pr[\lotalloc] & = \Pr [E_{RO} | 1 \in \lphase \text{ and }  2 \in \sphase] \cdot \Pr [1 \in \lphase \text{ and }  2 \in \sphase] \\
        & = \Pr \left[E_{RO} \;\middle| n_1 = 1, n_2 = 1 \right] \cdot \Pr [1 \in \lphase \text{ and }  2 \in \sphase] \\
        & \geq \frac{4}{5} \cdot \frac{1}{4e} = \frac{1}{5e},
        \end{aligned}
    \end{equation}

    where the second equality follows because \Cref{lem:balanced} holds for any ordering with $\{n_1 = 1, n_2 =1\}$ which is true under $\{1 \in \lphase \text{ and }  2 \in \sphase\}$,
    and the inequality is due to \Cref{lem:balanced,lem:balanced-HR}.

    \textbf{Max phase analysis:} 
    We now explain how the max phase contributes to the mechanism's expected utility. Let $\maxalloc$ denote the event that the max phase of the mechanism allocates to agent 1, the highest valued agent, at any price. Notice that under this event, the max phase yields utility at least $v_1-v_2$. We decompose $\maxalloc$ into the intersection of two independent events: i) the event $\{1 \in \lphasebar\}$, and ii)  the event $\secevent$ that the max phase allocates to the highest valued agent in $\lphasebar$ (this need not necessarily be agent $1$). We compute the probability of $\maxalloc$:    
    \begin{equation}
    \label{eq:mdet-max-prob}
    \Pr[\maxalloc]
    = \Pr[\secevent \cap 1 \in \lphasebar] 
    = \Pr[\secevent] \cdot \Pr[1 \in \lphasebar] 
    = \frac{1}{2e}.
    \end{equation}
    where the probability of $\secevent$ follows from the standard secretary problem analysis \citet{D63}. Note that $\secevent$ only depends on the ordering of the agents in $\lphasebar$, irrespective of the realization of $\lphasebar$ (and is thus independent of $\{1 \in \lphasebar\}$).
            
    We are now ready to compute the expected utility of the mechanism:
    \begin{equation}
    \begin{aligned}
        U_{\mdet} &= \E[\text{Utility of }\mdet] \\
        & \geq \E [\text{Utility of } \mdet| \maxalloc ]\cdot \Pr [\maxalloc ] + \E [\text{Utility of } \mdet| \lotalloc] \cdot \Pr [\lotalloc] \\
        & \geq (v_1-v_2) \cdot \frac{1}{2e} + \frac{\LOT-(v_1-v_2)}{5} \cdot \frac{1}{25} \cdot \frac{1}{5e}\\
        & = \frac{2\LOT + 623(v_1-v_2)}{1250e} \\
        & \geq \frac{\LOT}{625e},
    \label{ineq:expected-utility-M}
    \end{aligned}
    \end{equation}  
    where for the first inequality we have used the law of total expectation and the non-negativity of the utility objective, and the second inequality follows from \cref{eq:mdet-max-prob} for the first term, and ~\cref{ineq:mdet_lot_prob} and \Cref{lem:balanced-combined} for the second term. 
    \end{proof}

\section{Prediction-Augmented Mechanisms for Utility Maximization} \label{sec:pred}

In this section, we investigate which types of predictions are useful augmentations to overcome the logarithmic barrier to approximate social welfare,  the first-best benchmark ($\FB$), which is equal to the highest value $v_1$. There are two main types of predictions considered in the learning-augmented mechanism design literature: (i) predictions about the highest value $v_1$ (that is, the first-best benchmark $\FB$), and (ii) predictions about every agents' values (that is, the full instance). These are motivated by having black-box access to ML models that provide predictions about the current instance.

We first show that even perfect knowledge of the first best benchmark $\FB$---that is, \emph{how much} is the highest value $v_1$---is useless to achieve a constant-factor consistency guarantee (approximation to $\FB$ when predictions are correct). This is independent of robustness considerations. On the other hand, we find that while having predictions of each agent's value suffices, it is in fact overkill for consistency. The crucial information is the \emph{identity} of the agent with value $v_1$, which is sufficient to achieve a constant-factor consistency guarantee, while simultaneously maintaining a constant-factor robustness guarantee (approximation to the optimal lottery $\LOT$ when predictions are incorrect).

Our algorithm only requires a black-box binary prediction with minimal information granularity as opposed to predictions about the entire vector of values or logits. For instance, this could be an ML model that takes as input a feature vector of the current agent and outputs a binary prediction about whether or not they are the highest-valued agent.

\subsection{An Insufficient Prediction: How Much}
\label{sec:nofb}

We show that knowing \emph{how much} the first-best benchmark ($\FB$) is, or equivalently the value of the top agent, is insufficient: even with perfect knowledge of the value $\FB$, no mechanism (even randomized) can achieve a constant consistency guarantee even in a relaxed offline setting. This further highlights a significant difference between utility maximization and other objectives, where a prediction of $\FB$ has been used to improve approximation guarantees (for example, for revenue in \citep{BGTZ24}). Our analysis closely resembles that of \citet{QV23}, who show that $o(\log n)$-consistency is impossible in pen testing, which corresponds to the restriction of using only posted price mechanisms.\footnote{\citet{QV23} also consider a stronger prediction model that provides a multi-set of all agents' values (without revealing which agents have which values). They show that even perfect knowledge of \emph{how much all agents value} the item is insufficient information to obtain a constant approximation to $\FB$.} We extend this impossibility to hold for any truthful prior-free mechanism using techniques from Bayesian mechanism design.\footnote{We emphasize that this is qualitatively different from \citet{QV23}'s result since we do not know whether posted price mechanisms are optimal in the utility maximization setting \emph{with predictions}.} We believe that this approach may be useful to prove consistency impossibility results in other mechanism design settings as well. 

We present the theorem now together with a proof sketch. The full proof is in Appendix~\ref{app:appendix-learning-augmented-proofs}.

\begin{restatable}{theorem}{nofb}
    \label{thm:v_1-logn}
    Let \FB \, denote the optimal social welfare. Even given the exact value of \FB, any (prior-free) mechanism achieves at best an $O(1/\log n)$-approximation to \FB.
\end{restatable}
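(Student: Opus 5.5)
We will prove the statement by a Yao-style argument. We construct a distribution over instances in which every instance has the same top value $v_1 = \FB$, so the prediction carries no information. We then show that every truthful mechanism (even an offline, randomized one) earns expected utility $O(\FB/\log n)$ against this distribution. Since the prediction is constant across the support, a prior-free mechanism with access to it is just one fixed truthful mechanism, so a Bayesian upper bound for this prior transfers to the worst case.

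\textbf{Construction.} Normalize $\FB = 1$. We use a scale-invariant family of values with a planted top agent. Let the $n$ agents have i.i.d.\ values drawn from an equal-revenue-like distribution truncated to $[1/n,1)$. One natural choice is $\Pr[v > x] = 1/(nx)$ for $x \in [1/n,1)$, with an atom just below $1$. The planted top agent, placed at a uniformly random identity, has value exactly $1$. Equivalently, we take a distribution that has $\Theta(2^k)$ agents at value about $2^{-k}$ for each $k=0,\dots,\log n$. This makes $\sum_k (\text{number at level } k)\cdot 2^{-k}$ roughly constant per level. Every instance in the support has $v_1 = 1$, and the identities are symmetric.

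\textbf{Upper bound.} By the standard Myerson characterization of truthful single-parameter mechanisms, the expected utility (value minus payment) of agent $i$ equals $\E[x_i(v)\cdot \frac{1-F(v)}{f(v)}]$. Here $x_i$ is the allocation rule, and this is the ``virtual utility'' of \citet{HartlineR08}, i.e.\ the information rent. This holds for randomized or offline mechanisms and for any truthful interim allocation. For a product prior the expected consumer utility of any truthful mechanism is therefore at most $\E[\max_i \phi^u(v_i)]$, where $\phi^u(v) = (1-F(v))/f(v)$ is the virtual utility, ironed if needed. For the equal-revenue-type distribution, $\phi^u(v) = v$, so the mean is too small. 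The right choice is the distribution whose ironed virtual utility is flat. Concretely, I would take $F$ with constant hazard-rate structure on a logarithmic scale, e.g.\ $\Pr[v\le x] = 1 + \ln(x)/\ln n$ on $[1/n,1]$. Its ironed virtual-utility curve has area $\int \bar\phi^u\,dF = O(1/\log n)$, and this bounds the optimal expected utility from above. The planted agent at value $1$ has to be handled too. Because its identity is uniformly random, I would treat it as part of the prior: let each agent independently have value $1$ with probability $1/n$, and condition on at least one such agent, which is a constant-probability event. This preserves the $O(1/\log n)$ bound up to constants.

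\textbf{Main obstacle.} The hardest step is the one just described: making the instance simultaneously have $v_1 = \FB$ exactly (so the prediction reveals nothing) and keep a product-like structure so that the Myerson virtual-utility bound applies. I would first try an i.i.d.\ prior with an atom at $1$ of mass about $1/n$, apply the ironed virtual-utility bound to it, and then argue that conditioning on the atom being hit changes the expectation by at most a constant factor. If that fails, the fallback is to reduce to the posted-price lower bound of \citet{QV23} via their hard instance, with the Bayesian argument used only to replace ``posted price'' by ``any truthful mechanism.''
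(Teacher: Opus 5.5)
\textbf{Verdict: there is a genuine gap.} Your overall framework is sound and matches the paper's. You condition a product prior on the event that the cap value is attained, so $\FB$ is constant and the prediction is useless. Non-negativity of utility then costs only a factor $1/\Pr[\text{cap attained}]$. Finally, you bound the unconditioned optimum by the Hartline--Roughgarden bound $\E[\max_i \bar\phi^u(v_i)]$.

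The gap is in the concrete distribution and the calculation attached to it. For $F(x)=1+\ln x/\ln n$ on $[1/n,1]$ you have $f(x)=1/(x\ln n)$. So $\phi^u(x)=(1-F(x))/f(x)=x\ln(1/x)$, which is not flat: it rises to $1/e$ at $x=1/e$.

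The quantity you computed, $\int\bar\phi^u\,dF=\E[v]=(1-1/n)/\ln n$, is the utility of handing the item to one fixed agent. It is not an upper bound on the optimum with $n$ agents. The relevant bound is $\E[\max_i\bar\phi^u(v_i)]$. To make that $O(\FB/\log n)$, $\bar\phi^u$ itself must be $O(1/\log n)$ across the support (up to negligible mass); a small integral is not enough.

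For your distribution this bound is $\Theta(1)$, and it is essentially attained. A $p$-lottery at $p\approx 1/6$ has about $n\ln 6/\ln n$ agents above the price and earns about $(1-p)/\ln(1/p)-p\approx 0.3$, while $\FB\le 1$. So this instance admits a constant approximation and cannot witness the theorem. The equal-revenue choice fails too, but not because ``the mean is too small'': there $\phi^u(v)=v$, so the bound is just $\E[\max_i v_i]$ and is vacuous.

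The missing idea is to use the distribution whose virtual utility is genuinely constant. Requiring $(1-F)/f\equiv\mu$ forces $1-F(x)=e^{-x/\mu}$, an exponential. Take $\mu=2/\ln n$ and truncate at $1$:
\begin{itemize}
\item the virtual utility is $\mu$ below the cap and $0$ at the atom;
\item the distribution is MHR, so no ironing is needed;
\item every truthful mechanism earns at most $\mu=O(1/\log n)$;
\item the atom has mass $e^{-1/\mu}=n^{-1/2}$, so some agent sits at the cap with probability at least $1-e^{-\sqrt n}$.
\end{itemize}
Your conditioning step then finishes the proof. With $\mu=1/\ln n$ the atom has mass $1/n$, and your constant-probability conditioning also works; no separate planting is needed, since truncation creates the atom.

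Up to rescaling by $\frac{\ln n}{2}$, this is exactly the paper's instance $Y_i=\min(X_i,\frac{\ln n}{2})$ with $X_i\sim\mathrm{Exp}(1)$. There the optimal mechanism is a $0$-lottery with utility $1-1/\sqrt n$, while $\E[\FB]\ge(1-e^{-\sqrt n})\frac{\ln n}{2}$.

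Your fallback via \citet{QV23} does not close the gap on its own. Upgrading their posted-price bound to arbitrary truthful mechanisms requires exactly this kind of Bayesian instance with uniformly small virtual utility.
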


\begin{proof}[Proof sketch]

We prove the theorem by constructing a Bayesian instance where the benchmark FB is large in expectation, yet every truthful mechanism has only constant expected utility—even if it is told the realized value of $\FB$. The distribution is chosen so that the maximum equals a fixed cap with overwhelming probability, so knowing $\FB$ provides essentially no additional actionable information. Using standard Bayesian mechanism design tools, we show that the optimal expected utility without access to $\FB$ is $O(1)$, and that the same $O(1)$ bound (up to negligible slack) holds even for mechanisms that also receive $\FB$ as input. We finally prove that the expected $\FB$ is $\Omega(\log n)$ on this instance. Thus, no mechanism (with knowledge of $\FB$) has a constant-factor approximation to $\FB$. Finally, if a prior-free mechanism could guarantee a constant-factor approximation given $\FB$, applying it to this Bayesian instance and taking expectations would contradict the $O(1)$ upper bound.
\end{proof}

\subsection{A Beneficial Prediction: Who}

\label{subsec:learning-augmented}

In this section, we demonstrate that \emph{who}, the identity of the agent who has the highest value $v_1 = \FB$, is a sufficient prediction. Specifically, we present learning-augmented mechanisms that guarantee constant consistency and robustness when given access to online predictions. As agents arrive, the prediction at each step is a binary signal, with ``yes'' signaling the arrival of the agent that attains the first-best value $\FB$. This prediction model is strictly weaker than revealing the full value profile of all agents. Moreover, the predictor doesn't reveal the identity of the top (highest-valued $v_1 = \FB$) agent in advance and only provides the binary signal online. 

This alternative model of prediction provides a qualitatively different kind of information. By allowing a mechanism to identify the top agent without relying on costly screening prices, a prediction about \emph{who} is the top agent perfectly aligns with the objective of utility maximization. 
We emphasize that this model of prediction is necessary in the following sense: any mechanism aiming to 
guarantee $\FB$ must be given enough information to identify the top agent without relying on prices to screen the agents. 

We first observe that a \emph{trivial randomized mechanism} can now easily obtain constant consistency and constant robustness simultaneously, in stark contrast to the strong impossibility when knowing \emph{how much} the highest value is (\Cref{thm:v_1-logn}). In particular, by forgoing the requirement to design a deterministic mechanism, a simple randomization (or convex combination) between ``follow the prediction'' and our phases mechanism $\mdet$ achieves a linear tradeoff between consistency and robustness.\footnote{This is because a mechanism that follows the prediction allocates the item for free to predicted highest-valued agent obtains 1-consistency (but 0-robustness), and on the other hand, $\mdet$ obtains $O(1)$-robustness (but $\Omega(\log n)$-consistency).} 

However, our main question remains: can a \emph{deterministic mechanism} simultaneously obtain constant consistency and robustness? Further, how does it compare to the trivial convex combination obtained through randomization? 
We consider a class of learning-augmented mechanisms that use a prediction phase in addition to the three phases of mechanism $\mdet$. In the prediction phase, the mechanism simply follows the prediction and allocates to the predicted top agent (should they arrive) for free. A natural option for a prediction phase is at the beginning, either as a separate phase or concurrently with the sample phase. This essentially aims to simulate the randomized mechanism, but due to subtle dependencies on the arrival of the wrongly predicted agent, it does not result in a linear tradeoff (see \Cref{thm:msample}).

A more interesting option is to have a prediction phase concurrently with the max phase. Intuitively, we double down to extract as much surplus possible from the highest valued agent. The max phase originally aims to capture the surplus from solely allocating to the highest valued agent (by charging a screening cost of $v_2$), so by running a prediction phase concurrently, we can use predictions to identify the top agent instead of subjecting the agent to a screening cost. In fact, we show even better consistency-robustness tradeoffs for this option (see~\Cref{thm:mpred}). However, neither option ultimately beats the trivial randomized mechanism in the worst case.

\subsubsection{Best-of-Both-Worlds Guarantees}

In this section, we present two ways to augment our phases mechanism with predictions. 
Both approaches involve defining a prediction window $P$, during which the mechanism may use the predictions.
In particular, let $\predindex$ denote the arrival time of the predicted highest-valued agent, and let $\predevent =\{\predindex \in P\}$ be the event that the predicted highest-valued agent arrives during the prediction window $P$. Additionally, let $\predagent$ denote the actual rank of the predicted agent, so that $\predagent = 1$ precisely when the prediction is correct.

We will analyze these mechanisms through the familiar events $\maxalloc$ and $\lotalloc$. Recall that $\maxalloc$ is the event that the max-phase rule allocates the item to agent $1$, at any price, and $\lotalloc$ is the event that the mechanism both reaches the lottery phase (without having allocated the item) and does so while satisfying the balanced property (previously this was guaranteed by event $E_{RO}\cap \{1 \in \lphase \text{ and } 2 \in \sphase \}$). We highlight that these events will now need to account for the prediction window event $\predevent$ as well.

We start by presenting the learning-augmented mechanism $\msample$ that executes the prediction window concurrently with the sample phase. This mechanism takes a prediction window length parameter $\gamma \in (0,\frac{1}{2e})$ and executes the prediction rule in $P = [1:\gamma n]$.  We prove the following theorem for mechanism $\msample$.
\begin{restatable}{theorem}{theoremmsample}
    \label{thm:msample}
    For a prediction window length parameter $\gamma \in(0,\frac{1}{2e})$, the random-order learning-augmented mechanism $\msample (\gamma)$ achieves $\gamma$-consistency and  $\rho_1(\gamma)$-robustness, where $\rho_1(\gamma)\ge \frac{(1-2e\gamma)}{625e}$.    
\end{restatable}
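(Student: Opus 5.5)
We prove the two guarantees separately. Consistency will come directly from the prediction window. Robustness will come from showing that, outside the prediction window, $\msample(\gamma)$ behaves exactly like $\mdet$ on the same events $\maxalloc$ and $\lotalloc$ used in the proof of \Cref{thm:random-order-deterministic}, at a loss of only a factor $(1-2e\gamma)$. Recall the mechanism: during $P=[1:\gamma n]$, which lies inside the sample phase because $\gamma<\frac{1}{2e}$, it gives the item for free to the predicted agent if that agent arrives. Otherwise it runs $\mdet$ unchanged. Sampling is unaffected, since the sample phase serves no one.

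\textbf{Consistency.} Suppose the prediction is correct, so $\predagent=1$. Under a uniformly random $\sigma$, agent $1$ arrives in $P$ with probability $\gamma n/n=\gamma$. On this event the mechanism allocates to agent $1$ at price $0$, earning utility $v_1=\FB$. All utilities are non-negative, so the expected utility is at least $\gamma\,\FB$.

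\textbf{Robustness.} Now let the predicted agent $\predagent$ be arbitrary. Since utilities are non-negative, we may discard the outcomes in $\predevent$ and keep only $\predeventbar$. On $\predeventbar$ the mechanism's run coincides with that of $\mdet$. Hence the expected utility is at least
\[
\E\bigl[U_{\mdet}\mathbbm{1}\{\maxalloc\cap\predeventbar\}\bigr]+\E\bigl[U_{\mdet}\mathbbm{1}\{\lotalloc\cap\predeventbar\}\bigr].
\]

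The key structural observation concerns how $\sigma$ determines each event. Decompose $\sigma$ into two parts:
\begin{itemize}
\item[(a)] the assignment of agents to the three phases $\sphase,\mphase,\lphase$ together with the internal orders of $\mphase$ and $\lphase$;
\item[(b)] the internal order of $\sphase$.
\end{itemize}
Under a uniform $\sigma$, part (b) is a uniform permutation of $\sphase$, independent of part (a). The events $\maxalloc$ and $\lotalloc$, and the utility $\mdet$ obtains on them, depend only on part (a). This holds because $\pmax$ is the maximum over the set $\sphase$, and $\plot$ is computed from the set $\lphasebar$. In contrast, $\predevent$ holds only if $\predagent\in\sphase$ and $\predagent$ occupies one of the first $\gamma n$ of the $n/2e$ slots of $\sphase$. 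Conditioned on any realization of part (a), this has probability at most $\gamma n/(n/2e)=2e\gamma$.

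Therefore, for $E\in\{\maxalloc,\lotalloc\}$,
\[
\E\bigl[U_{\mdet}\mathbbm{1}\{E\cap\predeventbar\}\bigr]\ge(1-2e\gamma)\,\E\bigl[U_{\mdet}\mathbbm{1}\{E\}\bigr].
\]
Substituting \cref{eq:mdet-max-prob}, \cref{ineq:mdet_lot_prob}, and \Cref{lem:balanced-combined} into the chain of inequalities \cref{ineq:expected-utility-M} then yields utility at least $(1-2e\gamma)\LOT/(625e)$, as claimed.

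\textbf{Main obstacle.} The delicate step is the independence claim above. One must check that every quantity in the analysis of $\mdet$ depends only on phase membership and on the orders within $\mphase$ and $\lphase$. This includes the conditional utility bound of \Cref{lem:balanced-combined}, used in its pointwise-over-partitions form, and the secretary event $\secevent$. None of them may depend on the order inside $\sphase$. One must also check that the prediction rule does not alter $\pmax$ or $\plot$ on $\predeventbar$, which holds because sampled agents are observed but never served.
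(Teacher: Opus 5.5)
Your proof is correct. Consistency is argued exactly as in the paper, but your robustness argument takes a genuinely different route.

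\textbf{How the paper does it.} The paper recomputes the relevant probabilities for $\msample$ from scratch. For the max phase it redoes the secretary calculation, splitting on where the predicted agent arrives relative to agent $1$ (\Cref{lem:msample-max-robustness}). For the lottery phase it splits on $\predagent=2$ versus $\predagent\notin\{1,2\}$. The worst case $\predagent=2$ contributes the factor $\frac{1}{2}\bigl(\frac{1}{2e}-\gamma\bigr)$, which is the source of $(1-2e\gamma)$.

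\textbf{How you do it.} You observe that the whole run of $\mdet$ is a function of part (a): the phase assignment and the orders inside $\mphase$ and $\lphase$. The event $\predeventbar$ can fail only through the independent, uniform order inside $\sphase$, so $\Pr[\predeventbar\mid\text{(a)}]\ge 1-2e\gamma$ pointwise. This is a clean black-box reduction, and it can be made simpler still. Since $U_{\mdet}\ge 0$ is itself a function of part (a), you get $\E[U_{\msample}]\ge\E[U_{\mdet}\mathbbm{1}\{\predeventbar\}]\ge(1-2e\gamma)\,\E[U_{\mdet}]$ and can invoke \Cref{thm:random-order-deterministic} directly. You then need neither the split into $\maxalloc$ and $\lotalloc$ nor any pointwise form of \Cref{lem:balanced-combined}. (That split also silently uses that the two events are disjoint, which holds since one forces $1\in\lphasebar$ and the other $1\in\lphase$.)

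\textbf{What each route buys.} The paper's explicit computation yields finer, prediction-dependent terms:
\begin{itemize}
\item when $\predagent\notin\{1,2\}$, its lottery term keeps the factor $1-\gamma$ rather than $1-2e\gamma$;
\item its max-phase coefficient, asymptotically $\frac{e+2-2e\gamma}{4e^2}$, stays bounded away from zero as $\gamma\to\frac{1}{2e}$.
\end{itemize}
These refined terms are what feed the $\Delta$-dependent comparison with $\mpred$ in Appendix~\ref{app:compare}. Conversely, your max-phase factor $(1-2e\gamma)\frac{1}{2e}$ is larger than the paper's for small $\gamma$. The reason is that the paper's case split discards arrivals of the predicted agent in $\mphase$ before agent $1$. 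So neither max-phase bound dominates the other. For the theorem as stated, the worst case $\predagent=2$ makes the two final bounds coincide.

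Both arguments rest on the same modeling assumption: the prediction designates a single agent, fixed independently of $\sigma$.
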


\begin{proof}[Proof sketch]
    When the prediction is correct, that is, $\predagent = 1$, the mechanism achieves consistency only if the predicted top agent arrives in the prediction window, i.e., under event $\predevent$, which occurs with probability $\gamma$. Hence, the mechanism allocates to the top agent for free with probability $\gamma$, yielding consistency $\gamma$.

    For robustness, we condition on $\predeventbar$ (the predicted top agent does not arrive in the prediction window $P$), in which case the prediction rule never allocates. The analysis then largely mirrors that of $\mdet$: for the max phase, we bound the probability that the max rule allocates to agent~$1$ via a standard secretary argument; for the lottery phase, we lower-bound the probability of reaching the lottery phase without allocating earlier and while the balancedness property holds. A technical subtlety is that these probabilities depend on which agent is predicted to be highest (departing from the $\mdet$ analysis). For the max phase, conditioning on the predicted agent not arriving in the prediction phase $P$ introduces changes to the standard secretary analysis. For the lottery phase, the worst case for our analysis occurs when $\predagent = 2$, meaning that the prediction points to the second-highest-valued agent. Accordingly, we state robustness as the minimum over the resulting subcases, and give the full guarantee in Appendix~\ref{app:analysis-sample}.
\end{proof}

    We now formalize the learning-augmented mechanism $\mpred$ that executes the prediction window concurrently with the max phase. The mechanism $\mpred$ takes as input a prediction window length parameter $\alpha\in(0,\frac{e-1}{2e})$ and executes the prediction rule for $\alpha n$ steps in $P = [\frac{n}{2e}+1:(\frac{1}{2e}+\alpha)n]$, intersecting with the max phase. We prove the following theorem for mechanism $\mpred$.

    \begin{restatable}{theorem}{theoremmpred}
    \label{thm:mpred}
    Let $n\ge 27$. For prediction window length $\alpha \in(0,\frac{e-1}{2e})$, the random-order learning-augmented mechanism $\mpred (\alpha)$ achieves $\frac{1}{2e}\ln(1+2e\alpha)$-consistency and $\rho_2(\alpha)$-robustness, where $\rho_2(\alpha) \ge \frac{1-\alpha}{625e}$.
    \end{restatable}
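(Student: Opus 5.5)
The plan has two independent parts: consistency (assume $\predagent=1$) and robustness (arbitrary prediction). In both we reuse the event decomposition from the proof of \Cref{thm:random-order-deterministic}, namely $\maxalloc$ and $\lotalloc$, adjusted for the prediction window $P=[\frac{n}{2e}+1:(\frac{1}{2e}+\alpha)n]$.

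\textbf{Consistency.} Suppose agent $1$ is the predicted agent and arrives at time $t\in P$. Then $\mpred$ allocates to agent $1$ for free, for utility $v_1=\FB$, provided no earlier agent in the max phase accepted $\pmax$.

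Only agents arriving in $[\frac{n}{2e}+1:t-1]$ could have done so. Agent $1$ is not among them, so an earlier acceptance requires some agent in that interval to beat the maximum of the sample phase. That maximum is the top value among the first $t-1$ arrivals other than agent 1. This is the secretary event: the best of the first $t-1$ arrivals lies in the first $\frac{n}{2e}$ positions. Conditioned on agent $1$ being at time $t$, it has probability $\frac{n/(2e)}{t-1}$.

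Summing over $t\in P$, each with probability $\frac1n$, gives
\[
\frac{1}{2e}\sum_{t=n/2e+1}^{(1/2e+\alpha)n}\frac{1}{t-1}\;\ge\;\frac{1}{2e}\ln\!\Big(\frac{(1/2e+\alpha)n}{n/2e}\Big)=\frac{1}{2e}\ln(1+2e\alpha),
\]
using the integral lower bound on the harmonic sum. Some care with integrality of $n/2e$ is needed; the assumption $n\ge 27$ should absorb this. Since utility is nonnegative, the consistency bound follows.

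\textbf{Robustness.} Here the prediction may name any agent $\predagent$. The idea is to show that the prediction rule does not hurt $\mdet$'s analysis except on a set of orderings of measure at most about $\alpha$, which explains the factor $1-\alpha$.

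\emph{Lottery term.} Restrict to the event $\lotalloc'=E_{RO}\cap\{1\in\lphase,\,2\in\sphase\}\cap\predeventbar$. Under $\{1\in\lphase,2\in\sphase\}$ nobody accepts $\pmax=v_2$ in the max phase. If in addition the predicted agent does not arrive in $P$, the mechanism reaches the lottery phase with exactly the state $\mdet$ would have. Hence \Cref{lem:balanced-combined} still applies conditionally, and it remains to lower bound $\Pr[\lotalloc']$.

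I would write $\Pr[\lotalloc']\ge \Pr[\lotalloc]-\Pr[\lotalloc\cap\predevent]$ and bound the subtracted term case by case on $\predagent$:
\begin{itemize}
\item If $\predagent\in\{1,2\}$, then $\predevent$ is incompatible with $\{1\in\lphase,2\in\sphase\}$, so nothing is lost.
\item Otherwise, $\predindex$ is uniform over the remaining $n-2$ slots, and the question is how much conditioning on $\predevent$ moves $E_{RO}$.
\end{itemize}
The cleanest route is probably to argue $\Pr[\lotalloc'] \ge (1-\alpha)\Pr[\lotalloc]$, via a symmetry or coupling argument swapping the predicted agent's position. This step is the main obstacle: $E_{RO}$ depends on the ranks of agents in $\lphase$, and a low-ranked predicted agent being pinned into the first half slightly perturbs the balanced-walk comparison of \Cref{lem:balanced}. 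I would try conditioning on $\predindex\in P$ and redoing the coupling of \Cref{lem:not-balanced-i} with one fewer slot in $\bar N$. If that fails, I would accept a slightly worse constant and check it still gives $\frac{1-\alpha}{625e}$.

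\emph{Max term.} Bound $\Pr[\maxalloc\cap\predeventbar]$, the event that agent 1 is allocated by the max rule, by its value without the prediction, $\frac{1}{2e}$, times roughly $(1-\alpha)$. The prediction only intervenes when the predicted agent lands in $P$ before agent 1's acceptance. Then the predicted agent takes the item for free, which is still nonnegative utility.

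\emph{Combining.} Use the same linear combination as in \eqref{ineq:expected-utility-M}, with both terms scaled by $(1-\alpha)$, to obtain
\[
\frac{(1-\alpha)\bigl(2\LOT+623(v_1-v_2)\bigr)}{1250e}\;\ge\;\frac{(1-\alpha)\LOT}{625e}.
\]
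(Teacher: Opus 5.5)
Your consistency argument is the paper's own: the secretary probability $\frac{s}{t-1}$ summed over $t\in P$, with the integral bound on the harmonic sum. Your reduction of the lottery term to lower-bounding $\Pr[E_{RO}\cap\predeventbar\cap A]$, where $A=\{1\in\lphase,\ 2\in\sphase\}$, also matches the paper.

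\textbf{The gap.} The missing step is the one you call the main obstacle. You never show that conditioning on $\predeventbar$ leaves $\Pr[E_{RO}\mid A]\ge\frac45$ intact. No coupling needs to be redone: the missing idea is a short symmetry argument that gives independence.
\begin{itemize}
\item Take a predicted agent $k\ge3$. Given $A$ and the set partition $(\lphasebar,\lphase)$, agent $k$'s position inside its half is uniform. So $\Pr[\predevent\mid A,\text{partition}]$ equals $\frac{\alpha n}{n/2-1}$ if $k\in\lphasebar$ and $0$ otherwise. It therefore suffices to show $\Pr[E_{RO}\mid A,\,k\in\lphasebar]=\Pr[E_{RO}\mid A,\,k\in\lphase]$. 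Here $2\in\sphase$ may be replaced by $2\in\lphasebar$, since $E_{RO}$ depends only on the set partition.
\item $E_{RO}$ constrains $n_i$ only for $i\ge2$. Hence it is invariant under swapping the two equal-size halves ($n_i\mapsto i-n_i$). It is also invariant under exchanging the sides of agents $1$ and $2$.
\item Composing these two measure-preserving maps sends $\{1\in\lphase,\,2\in\lphasebar,\,k\in\lphase\}$ bijectively onto $\{1\in\lphase,\,2\in\lphasebar,\,k\in\lphasebar\}$.
\end{itemize}
Thus $\Pr[E_{RO}\mid A\cap\predeventbar]=\Pr[E_{RO}\mid A]\ge\frac45$. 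This is what the paper means by ``the symmetry of the balanced property.'' For $k\in\{1,2\}$, $\predeventbar$ is implied by $A$, as you noted.

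\textbf{Your fallbacks do not give the stated bound.} First, your intermediate target $\Pr[\lotalloc']\ge(1-\alpha)\Pr[\lotalloc]$ is false for $k\ge3$. By the independence above, the ratio is exactly $\Pr[\predeventbar\mid A]=1-\frac{\alpha n}{n-2}<1-\alpha$. The paper avoids this by keeping $\Pr[A]=\frac{n}{4e(n-1)}$ rather than rounding it to $\frac1{4e}$, and spending the factor $\frac{n}{n-1}$. One has $\frac{n}{n-1}\cdot\frac{n-2-\alpha n}{n-2}\ge 1-\alpha$ iff $\alpha\le\frac{n-2}{3n-2}$. Covering all $\alpha<\frac{e-1}{2e}$ is exactly what forces $n\ge27$; that assumption has nothing to do with integrality. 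Second, the union bound $\Pr[\lotalloc']\ge\Pr[\lotalloc]-\Pr[A\cap\predevent]\ge\frac{n}{4e(n-1)}\bigl(\frac45-\frac{\alpha n}{n-2}\bigr)$ only yields robustness of roughly $\frac{1-\frac54\alpha}{625e}$. So ``accepting a slightly worse constant'' does not prove the theorem as stated.

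\textbf{Max term.} This part is harmless, but your $\frac{1-\alpha}{2e}$ is unproven. You only need $\Pr[\maxalloc]\ge\frac{1-\alpha}{625e}$, so that the coefficient of $v_1-v_2$ stays nonnegative in the final combination. Restricting to agent $1$ in $\lphasebar$ and the (incorrect) predicted agent in $\lphase$ already gives $\Pr[\maxalloc]\ge\frac{n}{4(n-1)}\cdot\frac1e\ge\frac1{4e}$. The paper proves a sharper bound in \Cref{lem:mpred-robustness}.
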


    \begin{proof}[Proof sketch]
    
    To prove consistency, we assume correct prediction ($\predagent = 1$) and bound the probability that $\mpred$ allocates via the prediction rule. We condition on the predicted top agent arriving in the prediction window (event $\predevent$) and then apply a standard secretary argument to compute the allocation probability. This is formalized in the following lemma (proved in Appendix~\ref{app:analysis-max}).
    \begin{restatable}{lemma}{consistency}
        \label{lem:consistency}
        When the predictor is correct, $\mpred(\alpha)$ allocates to the highest valued agent for free with probability
        \[\Pr\left[\maxalloc \cap \predevent \mid \predagent = 1\right] \ge \frac{1}{2e}\ln(1+2e\alpha).
        \]
    \end{restatable}
    Using this lemma, we can immediately argue that the consistency of this mechanism is $\gamma(\alpha) =  \frac{1}{2e}\ln(1+2e\alpha)$.

    We now sketch the robustness guarantee. For the max phase, we again bound the probability that the max phase allocates to agent~$1$, but the secretary calculation requires a minor adjustment because the prediction rule may allocate earlier and preempt agent~$1$ (precisely because the prediction is incorrect). We formalize and lower-bound $\Pr[\maxalloc]$ in \Cref{lem:mpred-robustness} (Appendix~\ref{app:analysis-max}). For the lottery phase, we similarly lower-bound the probability of reaching the lottery phase without allocating earlier and while the balancedness event holds. The relevant probability bounds and the resulting robustness guarantee are given in Appendix~\ref{app:analysis-max}.
    \end{proof}

\paragraph{Comparing $\msample$ vs $\mpred$.} 
The guarantees in \Cref{thm:msample,thm:mpred} hold for worst case instances. However, our guarantees may be improved for specific families of instances. In fact, our analysis already provides a tighter bound in terms of $v_1 - v_2$ and $\LOT$, see \cref{ineq:expected-utility-M-sample2,ineq:expected-utility-M-pred} in the appendix. 
Our two classes of learning-augmented mechanisms follow different approaches to harness predictions: 
$\mpred$ doubles down on allocating to the highest valued agent in the max phase, but without subjecting them to a screening cost, while $\msample$ tries to simulate a convex combination (like the trivial randomization). This might suggest that the two different mechanisms could comparatively provide better or worse tradeoffs depending on the ratio of $v_1 - v_2$ and $\LOT$.

To compare the mechanisms more directly, we first normalize the tradeoff of $\mpred$ by defining $\gamma= \frac{1}{2e}\ln(1+2e\alpha)$ as its consistency, and deriving the robustness as a function of $\gamma$. Let $\Delta=\frac{v_1 - v_2}{\LOT}$. The robustness guarantees of all of our mechanisms ($\mdet$, $\msample$, $\mpred$) are worst at $\Delta=0$ and improve as $\Delta$ increases.
Our bounds then show that $\mpred$ achieves better robustness 
at the same consistency level, for any $\Delta \in [0,1]$. The derivations appear in Appendix~\ref{app:compare}. 

We stress that these comparisons reflect our lower-bound analysis and should not be interpreted as tight; establishing matching upper bounds is an interesting remaining open direction.

\paragraph{Discussion and open directions.} We can extend the consistency-robustness tradeoffs to the entire range of $[0,1]$ (beyond the $1/2e$ consistency in our formal theorems) with worse robustness guarantees. This requires extending the prediction window length to beyond $1/2e$ at a steeper (albeit still constant) cost to robustness.\footnote{We need to modify the robustness analysis to only use balancedness on a smaller set of agents.}
Note that neither of the two deterministic mechanisms manage to ultimately beat the linear trade-off of the trivial randomized mechanism. 
We conjecture that more informative predictions (i.e., about all agents' values) do not provide any additional advantage. This is because any mechanism seems to have the same two options to test the predictions: it can either sample some agents to verify their predictions, or post prices based on the predictions which risks allocating at prices that are too high. We leave as an open problem whether more informative predictions about all agents' values can help beat the linear trade-off using \emph{deterministic} mechanisms. 
While we establish best-of-both worlds guarantees asymptotically, our work does not provide tight guarantees nor optimizes constants, both interesting directions for future work. A natural direction for tightening our analysis is to modify the lottery phase analysis without conditioning on agent
2 appearing in the sample phase.
Finally, it would be interesting to generalize the best-of-both-worlds result to be error tolerant: for our current mechanisms, if the predictor chooses an agent who is “close” to the highest value, then our results carry over with that corresponding error, but one might consider defining different metrics to capture imperfect but not arbitrarily bad predictions. 

\bibliographystyle{plainnat} 
\bibliography{masterbib}

\appendix
\section{Proofs from \Cref{sec:randomorder}} \label{app:det}

\subsection{Parallel Analysis to \texorpdfstring{\citep{HartlineR08}}{Hartline and Roughgarden (2008)} for Online Partitioning}
\label{app:hr-analog}

The analysis in this section is parallel to that of \citep{HartlineR08} for RSOL, but modified for our online random-order (RO) setting rather than their offline random-sampling (RS) setting. We restate and prove \Cref{lem:balanced-combined}.  

    \balancedCombined*

    \begin{proof}
        
    We start by arguing that under event $\lotalloc = E_{RO}\cap \{1\in N_{\text{lot}}, 2\in N_{\text{sample}}\}$, the expected utilities of any posted price $p$ on $\lphasebar$ and on $\lphase$ are within a constant factor. 

    \begin{restatable}{lemma}{balancedSampleToN}
    \label{lem:balanced-Sample-to-N}
    Assuming event $\lotalloc$, the expected utility of any posted price $p$ on $\lphase$ is a $\frac{1}{25}$-approximation to the expected utility of the same posted price $p$ on $\lphasebar$.
    \end{restatable}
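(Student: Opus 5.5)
We need to prove that under $\lotalloc$, for any price $p$, the utility of a $p$-lottery on $\lphase$ is at least $\frac{1}{25}$ of the utility of the same $p$-lottery on $\lphasebar$. The plan is to follow the Hartline--Roughgarden argument for RSOL, using only the deterministic balancedness inequalities $\frac{i}{6}\le n_i\le\frac{5i}{6}$ for all $i\ge 2$. These hold pointwise on every permutation in $\lotalloc$, so no further probability is needed. For a price $p$, let $k=|\{i: v_i>p\}|$ and write
\[
U_N(p)=\frac{1}{n_k}\sum_{i\le k,\, i\in\lphase}(v_i-p), \qquad U_{\bar N}(p)=\frac{1}{\bar n_k}\sum_{i\le k,\, i\in\lphasebar}(v_i-p).
\]
The key tool will be summation by parts. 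Setting $w_i=v_i-p$, which is decreasing and positive for $i\le k$, we get
\[
\sum_{i\le k,\, i\in\lphase} w_i=\sum_{j=1}^{k} n_j\,(w_j-w_{j+1}), \qquad w_{k+1}:=0,
\]
and the analogous identity holds with $\bar n_j$ for $\lphasebar$.

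The main case is $k\ge 2$. Balancedness gives $n_j\ge \frac{j}{6}\ge\frac{1}{5}\bar n_j$ for $j\ge2$, since $\bar n_j\le\frac{5j}{6}$. At $j=1$, the event $1\in\lphase$ gives $n_1=1\ge \bar n_1=0$. Summing by parts therefore yields
\[
\sum_{i\le k,\, i\in\lphase} w_i\ \ge\ \tfrac15\sum_{i\le k,\, i\in\lphasebar} w_i.
\]
For the denominators, balancedness at $j=k$ gives $n_k\le\frac{5k}{6}\le 5\bar n_k$. Combining the two bounds, $U_N(p)\ge\frac{1}{25}U_{\bar N}(p)$, which is exactly where the constant $\frac{1}{25}$ comes from.

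The degenerate cases are handled directly. If $k=1$, only agent $1$ clears $p$, and agent $1$ lies in $\lphase$, so the lottery on $\lphasebar$ has no eligible agent and utility $0$; the inequality is trivial. If $k=0$, both utilities are $0$. Balancedness also guarantees $n_k\ge1$ and $\bar n_k\ge1$ whenever $k\ge 2$, so the averages are well defined. Finally, I would invoke the equivalence, already noted in the paper, between a random-order posted price on a phase and a $p$-lottery on that set. Because the arrival order within $\lphase$ is uniform given the partition, the expected utility of posting $p$ to $\lphase$ equals $U_N(p)$.

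The step I expect to be most delicate is the boundary term $j=1$ together with the requirement that $n_k\ge1$ when the lottery on $\lphasebar$ has positive utility. This is exactly where the conditioning $1\in\lphase$ in $\lotalloc$ is used, because balancedness says nothing at $i=1$. The conditioning $2\in\sphase$ is not needed for this lemma; it only matters elsewhere, for reaching the lottery phase.
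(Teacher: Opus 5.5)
Your proof is correct and is essentially the paper's argument: the paper writes the $v_{k+1}$-lottery utility in the same summation-by-parts form $\frac{1}{n_k}\sum_{i\le k} n_i(v_i-v_{i+1})$. It then loses one factor of $5$ termwise in the numerator, using $n_i\ge \bar n_i/5$, and another in the denominator, using $n_k\le 5\bar n_k$. Your version is slightly more careful: it handles an arbitrary price $p$ rather than only $p=v_{k+1}$, and it explicitly disposes of the case $k=1$, where the paper's formula would have $\bar n_1=0$ in the denominator.
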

    
    \begin{proof}
        We again leverage the equivalence of an online posted-price and a $p$-lottery. Let $U(T,p)$ denote the expected utility of a $p$-lottery on a set of agents $T$. We express the expected utility of a $v_{k+1}$-lottery as follows (see figure \ref{fig:utility-formula} for a proof by picture):
        \begin{align*}
            U(T,v_{k+1}) & = \frac{1}{|T\cap \{1,\dots,k\}|} \left(\sum_{i \in T \cap \{1,2,\dots,k\}} v_i \right) - v_{k+1} \\
            & = \frac{1}{|T\cap \{1,\dots,k\}|}\sum_{i\leq k} |T \cap \{1,\dots,i\}| \cdot (v_i-v_{i+1}).   
        \end{align*}

\begin{figure}[hbtp!]
    \centering
\includegraphics[width=0.8\textwidth]{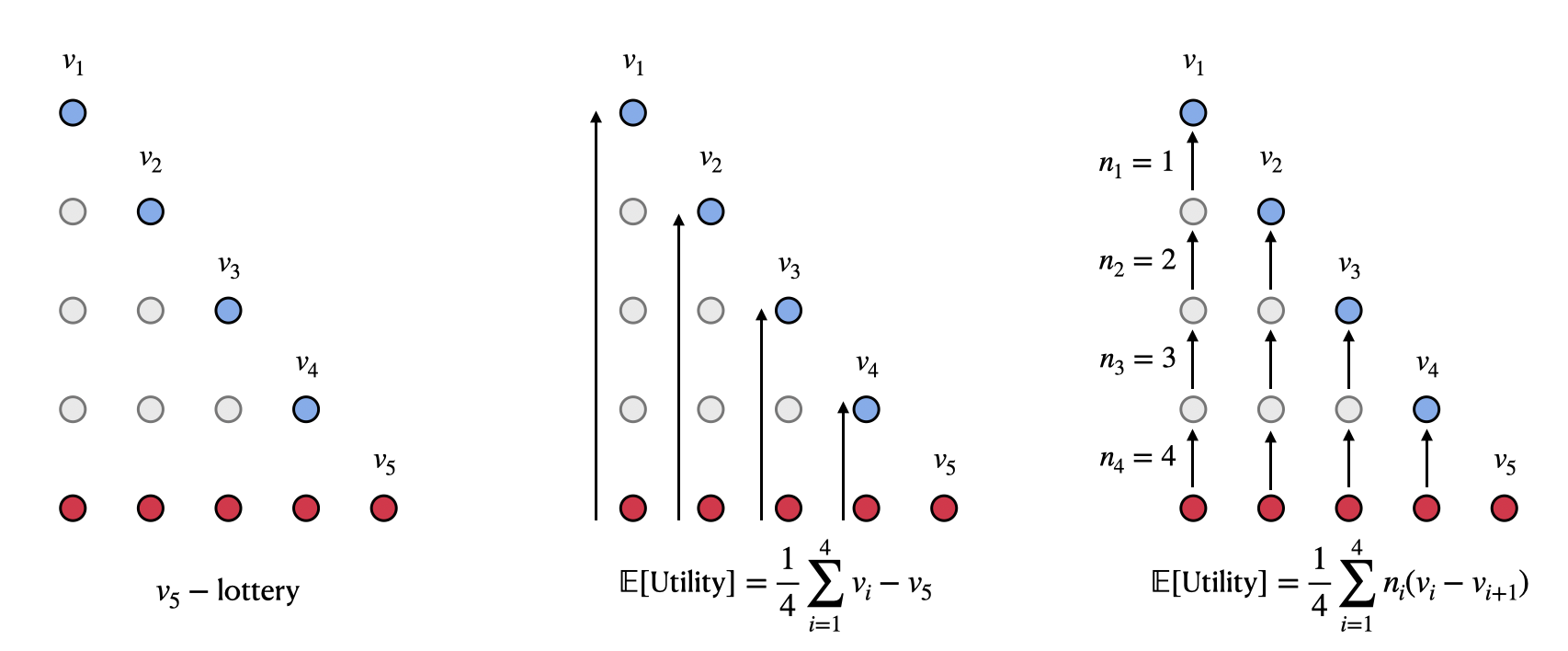} 
    \caption{Expected utility of a $v_5$-lottery.}
    \label{fig:utility-formula}
    \end{figure}

        Consider now any $p$-lottery on $\lphase$ posting a price of $v_{k+1}$.  Notice that since we have assumed event $\lotalloc = E_{RO} \cap \{1 \in \lphase, 2 \in \sphase\}$ occurs, we have for $i \ge 2$ that both $n_i,\Bar{n}_i$ are in $[\frac{i}{6},\frac{5i}{6}]$, or equivalently, $n_i \in [\frac{\bar{n}_i}{5}, 5\bar{n}_i]$  (and $\bar{n}_1=0$ and $n_1 = n_2 = \bar{n}_2 =1$). We lower bound the expected utility of posting $v_{k+1}$ on $\lphase$:
        \begin{align*}            
            U(\lphase,v_{k+1}) = \frac{1}{n_k}\sum_{i = 1}^{k} n_i (v_i-v_{i+1}) \geq \frac{1}{5{\bar{n}_k}}\sum_{i=1}^k \frac{1}{5} \bar{n}_i (v_i-v_{i+1}) = \frac{1}{25}U(\lphasebar,v_{k+1}).
        \end{align*}        
    \end{proof}

        We continue by arguing that the expected utility of the optimal $p$-lottery in $\lphasebar$ is approximately the expected utility of the optimal $p$-lottery over all agents $\LOT$.

    \begin{restatable}{lemma}{balancedToOPT}
    \label{lem:balanced-to-OPT}
        Assuming events $\lotalloc$, the expected utility of the optimal $p$-lottery in $\lphasebar$ is a $\frac{1}{5}$-approximation to $\LOT-(v_1-v_2)$.
    \end{restatable}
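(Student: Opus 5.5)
The plan is to exhibit one specific price and show that it already gives utility at least $\frac{1}{5}(\LOT-(v_1-v_2))$ on $\lphasebar$. The optimal $p$-lottery on $\lphasebar$ can only do better. Let $p^*=v_{k+1}$ be the optimal lottery price over all agents; without loss of generality $p^*$ is one of the values, or $0$, which we treat as $v_{n+1}=0$. Using the identity from the proof of \Cref{lem:balanced-Sample-to-N}, we write
\[
\LOT = U([n],v_{k+1}) = \frac{1}{k}\sum_{i=1}^{k} i\,(v_i-v_{i+1}),
\qquad
U(\lphasebar,v_{k+1}) = \frac{1}{\bar n_k}\sum_{i=1}^{k} \bar n_i\,(v_i-v_{i+1}).
\]
We then split into the cases $k=1$ and $k\ge 2$.

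If $k=1$, then $\LOT = v_1-v_2$, so the right-hand side $\LOT-(v_1-v_2)$ is $0$ and the claim is trivial. For $k\ge 2$, we drop the $i=1$ term in both sums, which is fine because all terms are nonnegative. On $\lphasebar$ we have $\bar n_1=0$ anyway under $\lotalloc$ (agent $1\in\lphase$). For the benchmark side,
\[
\LOT - (v_1-v_2) \le \LOT - \tfrac{1}{k}(v_1-v_2) = \frac{1}{k}\sum_{i=2}^{k} i\,(v_i-v_{i+1}).
\]
Now apply the balancedness property $E_{RO}$, which under $\lotalloc$ gives $\bar n_i\in[\frac{i}{6},\frac{5i}{6}]$ for every $i\ge 2$. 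Bounding $\bar n_i\ge i/6$ in the numerator and $\bar n_k\le 5k/6$ in the denominator yields
\[
U(\lphasebar,v_{k+1}) \ge \frac{1}{\frac{5k}{6}}\sum_{i=2}^{k}\frac{i}{6}(v_i-v_{i+1}) = \frac{1}{5}\cdot\frac{1}{k}\sum_{i=2}^{k} i\,(v_i-v_{i+1}) \ge \frac{1}{5}\bigl(\LOT-(v_1-v_2)\bigr).
\]
Since $\plot(\lphasebar)$ maximizes $U(\lphasebar,\cdot)$, its utility on $\lphasebar$ is at least this quantity, which proves the lemma.

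The main point needing care is the removal of agent $1$'s contribution. Agent $1$ is absent from $\lphasebar$, so we must check that subtracting $v_1-v_2$ from $\LOT$ accounts for that term. The term's weight in $\LOT$ is $\frac{1}{k}(v_1-v_2)\le v_1-v_2$, so the subtraction suffices. A second edge case is a price below all values ($p=0$), which is handled by the convention $v_{n+1}=0$. A third is whether $\bar n_k>0$; it is, since $\bar n_2=1$ for $k\ge2$. The constant $\frac{1}{5}$ then comes out exactly as the ratio of $\frac16$ to $\frac56$.
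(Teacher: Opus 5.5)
Your proof is correct and follows essentially the same route as the paper. Both arguments evaluate $U(\lphasebar,\cdot)$ at the globally optimal price $v_{k+1}$, use the telescoped utility formula, bound $\bar n_i\ge i/6$ and $\bar n_k\le 5k/6$ via $E_{RO}$, and absorb the missing agent-$1$ term into the subtracted $v_1-v_2$. Your separate treatment of $k=1$ is extra care the paper's proof skips: there $\LOT=v_1-v_2$, while the paper's formula would divide by $\bar n_1=0$.
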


    \begin{proof}
       Assume that the optimal $p$-lottery on $\lphasebar$ is a price of $v_{k+1}$, and the optimal $p$-lottery in the entire instance is $v_{l^*+1}$. We bound the expected utility of the optimal lottery on $\lphasebar$:
        \begin{align*}
            U(\lphasebar,v_{k+1}) \geq U(\lphasebar,v_{l^*+1}) = \frac{1}{\bar{n}_{l^*}} \sum_{i =1}^{l^*} {\bar{n}_i (v_i-v_{i+1})} \geq \frac{6}{5l^*} \sum_{i =2}^{l^*} {\frac{i}{6} (v_i-v_{i+1})} \geq \frac{\LOT - (v_1-v_2)}{5},
        \end{align*}
    where the first inequality is by optimality of $v_{k+1}$ on the $\lphasebar$ and the second inequality follows because $\bar{n}_i \ge i/6$ and $\bar{n}_{l^*} \leq \frac{5}{6} l^*$ by $E_{RO}$.
    \end{proof}

    Putting things together, we can lower bound the expected utility of the $v_{k+1}$-lottery on $\lphase$ (where $v_{k+1}$ is the optimal lottery in $\lphasebar$) 
    \[
    U(\lphase, v_{k+1}) \geq \frac{1}{25} U(\lphasebar, v_{k+1}) \geq \frac{1}{125} \left(\LOT - (v_1-v_2) \right),
    \]

    where we have used \Cref{lem:balanced-Sample-to-N} for the first inequality and \Cref{lem:balanced-to-OPT} for the second.
    \end{proof}
\section{Proofs from \Cref{sec:pred}}

\label{app:appendix-learning-augmented-proofs}

\subsection{Missing proof from Subsection~\ref{sec:nofb}}

    \nofb*

\begin{proof}
    
To prove the theorem, we show that for any mechanism that knows the value of $\FB$, there exists a Bayesian instance where this mechanism's expected utility is $O(1)$, while the expected first-best benchmark is $\Omega(\log n)$. 

We consider two classes of mechanisms: $\mathcal{C}$, which only takes agent bids as input, and $\mathcal{C}'$, which takes both bids and the \emph{true} numerical value of $\FB$ as inputs. Our goal is to show that the optimal mechanism in $\mathcal{C}'$, which \emph{knows} $\FB$, is upper-bounded by a mechanism in $\mathcal{C}$, which does not. 

Let $U(\mathcal{M})$ denote the expected utility of a mechanism $\mathcal{M}$. We define a Bayesian instance $Y$ to compare the two mechanisms. In this instance, each agent $i$ has value $v_i$ drawn independently from the truncated exponential distribution $Y_i = \min\!\left(X_i,\tfrac{\ln n}{2}\right)$ 
where $X_i \sim \mathrm{Exp}(1)$. Note that the maximum possible value is $\tfrac{\ln n}{2}$, and the probability that this value is realized by at least one agent can be lower-bounded as:
\begin{align*}
\Pr \!\left[\FB = \tfrac{\ln n}{2}\right]
&= \Pr \!\left[\exists\, i :\, Y_i = \tfrac{\ln n}{2}\right] \\
&= 1 - \left(1- e^{-\frac{\ln n}{2}}\right)^n \\
& = 1 - \left(1-\tfrac{1}{\sqrt{n}}\right)^n 
\geq 1 - e^{-\sqrt{n}}.
\end{align*}

Thus, with high probability, the realized value of $\FB$ equals $\tfrac{\ln n}{2}$, and thus if we set $n$ to be large, knowledge of $\FB$ becomes redundant. Additionally, in this instance, the benchmark we compare against is the expected maximum value, which we can lower bound as:
\[
\mathbb{E}[\max_i(Y_i)] 
\geq \left(1 - e^{-\sqrt{n}}\right)\cdot \tfrac{\ln n}{2}
= \Omega(\log n).
\]

We start by analyzing the optimal mechanism $\opt$ in $\mathcal{C}$ (i.e., without knowledge of $\FB$). The distribution of each agent's values and their virtual utility are:
\[
f_Y(y)=
\begin{cases}
e^{-y}, & 0<y<\frac{\ln n}{2},\\[6pt]
\frac{1}{\sqrt{n}}, & y=\frac{\ln n}{2},
\end{cases}
\]
and $\vartheta(y)=\mathbbm{1}[y < \frac{\ln n}{2}]$, 
which dictates that the distribution satisfies the MHR condition. 

The optimal mechanism for i.i.d.~agents with MHR distributions is a $0$-lottery (Corollary 2.11 of \cite{HartlineR08}). We now compute its expected value, which is simply the expected value of any draw from the distribution $Y_i$:
\begin{align*}
U(\opt) = \mathbb{E}[Y_i]
&= \int_0^{\infty} \Pr(\min(X, \frac{\ln n}{2}) > t)\, dt \\
&= \int_0^{\frac{\ln n}{2}} e^{-t}\, dt \\
&= 1 - \frac{1}{\sqrt{n}} = O(1).
\end{align*}

Consider now $\opt'(\FB)$, the optimal randomized mechanism for this instance that additionally knows the realized value of $\FB$. To upper-bound its performance, we compare $\opt'(\tfrac{\ln n}{2})$ to $\opt$; that is, we consider the mechanism $\opt'$ when it is run with the fixed (and potentially incorrect) input that $\FB$ is $\tfrac{\ln n}{2}$. We have:
\begin{align*}
&\mathbb{E}\!\left[U\!\left(\opt'(\tfrac{\ln n}{2})\right) \;\middle|\; \FB = \tfrac{\ln n}{2}\right] \\
&= \frac{
U\!\left(\opt'(\tfrac{\ln n}{2})\right)}{
\Pr \!\left[\FB = \tfrac{\ln n}{2}\right]
}
\\
&\quad - \frac{\mathbb{E}\!\left[U\!\left(\opt'(\tfrac{\ln n}{2})\right) \;\middle|\; \FB < \tfrac{\ln n}{2}\right]\Pr \!\left[\FB < \tfrac{\ln n}{2}\right]
}{
\Pr \!\left[\FB = \tfrac{\ln n}{2}\right]
} \\
&\le \frac{U(\opt)}{\Pr \!\left[\FB = \tfrac{\ln n}{2}\right]},
\end{align*}
where the equality follows from the law of total expectation, and the inequality uses non-negativity of utility (from IR) together with the optimality of $\opt$ in class $\mathcal{C}$ (among mechanisms that do not know the true $\FB$).

We can now upper bound the expected utility of $\opt'$ when it is given the true value of $\FB$:
\begin{align*}
&U(\opt'(\FB)) \\
&= \mathbb{E}\!\left[U(\opt'(\tfrac{\ln n}{2})) \;\middle|\; \FB = \tfrac{\ln n}{2}\right]\Pr \!\left[\FB = \tfrac{\ln n}{2}\right] \\
&\quad + \mathbb{E}\!\left[U(\opt'(\FB)) \;\middle|\; \FB < \tfrac{\ln n}{2}\right]\Pr \!\left[\FB < \tfrac{\ln n}{2}\right] \\
&\le U(\opt) + \tfrac{\ln n}{2}\cdot e^{-\sqrt{n}} = O(1),
\end{align*}

where the inequality comes from the expression above as well trivially upper bounding the utility with $\frac{\ln n}{2}$ in the exponentially low probability event of $FB< \frac{\ln n}{2}$.

Hence any Bayesian mechanism $\mathcal{M}'$ that knows the value of $\FB$ has expected utility on this instance of at most $O(1)$. 

We now extend this impossibility to \emph{prior-free} mechanisms.
Suppose, for contradiction, that there exists a prior-free mechanism $\mathcal{A}$ which, given $\FB$, achieves a constant-factor approximation to $\FB$ on every valuation profile.
Apply $\mathcal{A}$ to the Bayesian instance defined above and take expectations over the randomness of the instance.
By the assumed prior-free guarantee, we must have $\E[U(\mathcal{A})] \ge c \cdot \mathbb{E}[\FB],$
for some constant $c>0$. This, however, is a contradiction.

Thus, no prior-free mechanism with access to the value of $\FB$ can achieve a constant-factor approximation to $\FB$.
\end{proof}

\subsection{Missing proofs in the analysis of $\msample$.}
\label{app:analysis-sample}

We recall the theorem describing the mechanism's $\msample$ guarantees.

    \theoremmsample*

    \begin{proof}
        In the main body, we concluded on the consistency of the mechanism, so here we analyze the robustness guarantee.

        We start with the max phase. For mechanism $\msample$, the favorable event for this phase is $\maxalloc$, defined as $\maxalloc = \secevent \cap \predeventbar \cap \{1 \in \lphasebar \}$. 

        \begin{restatable}{lemma}{robustnesssamplemax}
        \label{lem:msample-max-robustness}
        When the predictor is incorrect, mechanism $\msample(\gamma)$ allocates to the highest-valued agent $v_1$ during the max phase with probability
        \[
        \Pr[\maxalloc]
        \ge
        \frac{1}{2e}
        -
        \frac{n(e-1)}{4e^2(n-1)}
        +
        \left(\frac{1}{2e}-\gamma\right)
        \frac{\frac n{2e}-1}{n-1}.
        \]
        \end{restatable}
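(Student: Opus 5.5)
The plan is to compute $\Pr[\maxalloc]$ by conditioning on where the (incorrectly) predicted agent $\predagent\neq 1$ arrives. Because $\predagent \neq 1$, the event $\maxalloc = \secevent \cap \predeventbar \cap \{1\in\lphasebar\}$ only involves $\predagent$ through two things: whether it lands in $P=[1:\gamma n]$, and how its position changes the secretary event inside $\lphasebar$. I would split on three disjoint placements of $\predagent$:
\begin{enumerate}
\item inside $P$, which has probability $\gamma$ and contributes nothing;
\item in the rest of the sample phase, $\sphase\setminus P$, which has probability $\frac{1}{2e}-\gamma$;
\item outside $\sphase$, which has probability $1-\frac{1}{2e}$.
\end{enumerate}
In cases 2 and 3 the prediction rule never fires. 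The mechanism then behaves exactly like $\mdet$, except that the permutation is conditioned on the position of $\predagent$.

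For case 3, I would use the fact that $\maxalloc$ occurs whenever $1\in\mphase$ and the highest-valued agent among the first half, other than $1$, lies in $\sphase$. A first bound is the unconditional $\mdet$ event from \cref{eq:mdet-max-prob}, which has probability $\frac{1}{2e}$. From it I would subtract the mass of the configurations that are lost when we condition on $\predagent\notin\sphase$. The cleanest way is to write
\[
\Pr[\maxalloc]\ \ge\ \Pr[\secevent\cap\{1\in\lphasebar\}] \;-\; \Pr[\secevent\cap\{1\in\lphasebar\}\cap\{\predagent\in \sphase\}] \;+\; \Pr[\maxalloc\cap\{\predagent\in\sphase\setminus P\}].
\]
The first term equals $\frac{1}{2e}$. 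For the second term I would drop the secretary condition in favour of the cruder requirement $1\in\mphase$, $\predagent\in\sphase$. By a two-element placement count this has probability $\frac{e-1}{2e}\cdot\frac{n/(2e)}{n-1}\cdot\frac{n}{n}$ up to the stated normalization, which matches $\frac{n(e-1)}{4e^2(n-1)}$.

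For the third term, the case where $\predagent\in\sphase\setminus P$ (probability $\frac{1}{2e}-\gamma$), I would keep a simple sufficient sub-event rather than the full secretary event. Given the slot of $\predagent$, it suffices that agent $2$ also lands in the remaining $\frac{n}{2e}-1$ sample slots while agent $1$ lands anywhere in $\lphasebar$ after the sample. In that case $\pmax=v_2$ and only agent $1$ can accept in the max phase, so allocation to agent $1$ is guaranteed. I expect the conditional probability of this to be of order $\frac{n/(2e)-1}{n-1}$, after accounting for agent $1$'s placement. This is where I would have to check carefully that the factors collapse to exactly the stated third term, so I would verify the constants against the target expression before finalizing.

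The main obstacle is bookkeeping rather than conceptual. The events $\{\predagent\in P\}$, $\{1\in\lphasebar\}$ and $\secevent$ are no longer independent once we condition on a specific agent's slot, so the clean factorization of \cref{eq:mdet-max-prob} breaks. I would handle this by always conditioning first on the positions of $\predagent$, $1$ and $2$, and counting the remaining slots. Lower-bounding by these sub-events, and upper-bounding the subtracted term by placements alone, avoids redoing the full secretary computation under the conditioning.
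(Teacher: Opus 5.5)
Write $s=\frac{n}{2e}$, $h=\frac n2$, $q=\gamma n$ and $A=\secevent\cap\{1\in\lphasebar\}$. The first part of your plan is sound. The decomposition $\Pr[\maxalloc]=\Pr[A]-\Pr[A\cap\{\predindex\le s\}]+\Pr[\maxalloc\cap\{q<\predindex\le s\}]$ is in fact an identity, and $\Pr[A]=\frac{s}{n}(H_{h-1}-H_{s-1})\ge\frac{1}{2e}$ (an inequality, not an equality). Bounding the subtracted term by $\Pr[1\in\mphase,\ \predindex\le s]=\frac{s(h-s)}{n(n-1)}=\frac{n(e-1)}{4e^2(n-1)}$ reproduces the paper's first two terms exactly. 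Numerically this is the same as the paper's case $\predindex>j$, and your inclusion--exclusion is a slightly cleaner way to see it.

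The gap is the third term, exactly where you hesitated. Your sub-event requires agent $2$ in one of the other $s-1$ sample slots and agent $1$ in $\mphase$. It has probability $\frac{s-q}{n}\cdot\frac{s-1}{n-1}\cdot\frac{h-s}{n-2}$, which is the target $\bigl(\frac{1}{2e}-\gamma\bigr)\frac{\frac{n}{2e}-1}{n-1}$ times $\frac{h-s}{n-2}\approx\frac{e-1}{2e}\approx 0.32$. Nothing in your chain absorbs this constant-factor loss, so whenever $\gamma<\frac{1}{2e}$ it ends at a strictly weaker inequality than the one stated. (For $\predagent=2$ the sub-event is also ill-posed, though that case is easy since then $\pmax=v_2$ automatically.)

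Requiring agent $2$ to be sampled is far more than needed; the fix is to keep the secretary event. Condition on agent $1$ at position $j\in[s+1:h]$ and on $q<\predindex\le s$, which has probability $\frac{s-q}{n-1}$. The best of the other $j-2$ agents among the first $j-1$ arrivals is uniform over their $j-2$ slots, and $s-1$ of those are sample slots. Whenever that agent is sampled, so is the best of all first $j-1$ arrivals. Hence the max rule serves agent $1$ with conditional probability at least $\frac{s-1}{j-2}$. Summing $\frac1n\cdot\frac{s-q}{n-1}\cdot\frac{s-1}{j-2}$ over $j$ gives $\frac{(s-q)(s-1)}{n(n-1)}\bigl(H_{h-2}-H_{s-2}\bigr)\ge\frac{(s-q)(s-1)}{n(n-1)}$, since $H_{h-2}-H_{s-2}\ge\ln(h/s)=1$. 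That harmonic sum over agent $1$'s position is what turns your factor $\frac{h-s}{n-2}$ into $1$.
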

        
        \begin{proof}
        For the max phase, as before, we quantify the probability that the agent with value $v_1$ is allocated during the max phase. Let $s=\frac{n}{2e}, h=\frac n2, q=\gamma n$.
        Throughout the proof, we condition on an arbitrary incorrect prediction, that is, $\predagent\neq 1$, and omit this conditioning from the notation.
        
        Condition on the event that agent $1$ arrives at position $j\in[s+1:h]$. We lower bound the probability that the mechanism allocates to agent $1$ at this position. There are two disjoint cases that we count.
        
        \textbf{Case 1.} The predicted agent may arrive in the sample phase, but after the prediction window, that is,
        \[
        q < \predindex \le s.
        \]
        The max phase allocates to agent $1$ if the maximum among the first $j-1$ arrivals lies in the sample positions $[1:s]$; denote this secretary event by $S_j$. Conditional on agent $1$ arriving at position $j$, the event $q<\predindex\le s$ occurs with probability
        \[
        \Pr[q<\predindex\le s \mid 1 \text{ at position } j] = \frac{s-q}{n-1}.
        \]
        After conditioning on $q<\predindex\le s$, one position in the sample is occupied by the predicted agent. If the predicted agent is the highest-valued agent among the first $j-1$ arrivals, then $S_j$ holds. Otherwise, the highest-valued agent among the remaining first $j-2$ arrivals is uniformly distributed over the remaining $j-2$ positions, of which $s-1$ are in the sample. Therefore,
        \[
        \Pr[S_j \mid 1 \text{ at position } j \cap  q<\predindex\le s]
        \ge
        \frac{s-1}{j-2}.
        \]
        Thus, the contribution of this first case is at least
        \[
        \Pr[S_j\cap q<\predindex\le s \mid 1 \text{ at position } j ] \geq \frac{s-q}{n-1}\cdot \frac{s-1}{j-2}.
        \]
        
        \textbf{Case 2.} The predicted agent may arrive after agent $1$, that is, $\predindex>j$. In this case, the prediction rule again does not allocate before agent $1$. The mechanism allocates to agent $1$ if the secretary event $S_j$ holds. Conditional on agent $1$ arriving at position $j$, we have
        \[
        \Pr[\predindex>j \mid 1 \text{ at position } j]
        =
        \frac{n-j}{n-1}.
        \]
        Moreover, conditional on $\predindex>j$, the maximum among the first $j-1$ arrivals is uniformly distributed over the first $j-1$ positions, so
        \[
        \Pr[S_j \mid 1 \text{ at position } j \cap \predindex>j]
        =
        \frac{s}{j-1}.
        \]
        Thus, the contribution of this second case is
        \[
        \Pr[S_j \cap  \predindex>j\mid 1 \text{ at position } j] \geq \frac{n-j}{n-1}\cdot \frac{s}{j-1}.
        \]
        
        Averaging over the position $j$ of agent $1$ in the max phase, we get
        \begin{align*}
        \Pr[\maxalloc]
        &\ge
        \sum_{j=s+1}^{h}
        \frac{1}{n}
        \left(
        \frac{s-q}{n-1}\cdot \frac{s-1}{j-2}
        +
        \frac{n-j}{n-1}\cdot \frac{s}{j-1}
        \right) \\
        &=
        \frac{(s-q)(s-1)}{n(n-1)}
        \sum_{j=s+1}^{h}\frac{1}{j-2}
        +
        \frac{s}{n(n-1)}
        \sum_{j=s+1}^{h}\frac{n-j}{j-1}.
        \end{align*}
        For the first sum,
        \[
        \sum_{j=s+1}^{h}\frac{1}{j-2}
        =
        H_{h-2}-H_{s-2}.
        \]
        For the second sum, using
        \[
        \frac{n-j}{j-1}
        =
        \frac{n-1}{j-1}-1,
        \]
        we obtain
        \[
        \sum_{j=s+1}^{h}\frac{n-j}{j-1}
        =
        (n-1)\left(H_{h-1}-H_{s-1}\right)-(h-s).
        \]
        Therefore,
        \begin{align*}
        \Pr[\maxalloc]
        &\ge
        \frac{(s-q)(s-1)}{n(n-1)}
        \left(H_{h-2}-H_{s-2}\right)
        +
        \frac{s}{n}
        \left(H_{h-1}-H_{s-1}\right)
        -
        \frac{s(h-s)}{n(n-1)}.
        \end{align*}
        
        Finally, we apply the same integral lower bound as in \Cref{ineq:harmonics}. In particular,
        \[
        H_{h-1}-H_{s-1}
        =
        \sum_{j=s}^{h-1}\frac{1}{j}
        \ge
        \int_s^h \frac{dx}{x}
        =
        \ln\!\left(\frac{h}{s}\right)
        =
        1.
        \]
        Moreover,
        \[
        H_{h-2}-H_{s-2}
        =
        \sum_{j=s-1}^{h-2}\frac{1}{j}
        \ge
        \sum_{j=s}^{h-1}\frac{1}{j}
        =
        H_{h-1}-H_{s-1}
        \ge 1.
        \]
        Substituting $s=\frac{n}{2e}$, $h=\frac n2$, and $q=\gamma n$, we conclude that
        \[
        \Pr[\maxalloc]
        \ge
        \frac{1}{2e}
        -
        \frac{n(e-1)}{4e^2(n-1)}
        +
        \left(\frac{1}{2e}-\gamma\right)
        \left(\frac{\frac n{2e}-1}{n-1}\right).
        \]
        \end{proof}

        Finally, note that $\maxalloc$ is the event that we allocate to agent $1$ during the max phase, and that this guarantees utility of at least $v_1-v_2$.

        Now for the lottery phase, we can lower-bound the expected utility under event $\lotalloc$, which is essentially the event that the mechanism reaches the lottery phase, without having allocated the item, and with the balancedness property holding. For mechanism $\msample$, this event is defined as $\lotalloc= E_{RO} \cap \predeventbar \cap \{ 1 \in \lphase \text{ and } 2 \in \sphase\}$, which, besides the balancedness event $E_{RO}$, boils down to computing the arrival probabilities of three agents. The subtlety in the analysis of the lottery phase has to do with whether or not the prediction incorrectly signals 2 as the top agent rather than 1. We consider now these cases:
    
        \paragraph{Case 1. Predicted max is not the second-highest agent.}

        We start by proving the following useful lemma.

        \begin{lemma}\label{lem:E31-predeventbar-predneq12}
        Let $n\ge 3$ and let $\sigma$ be a uniformly random permutation of $[n]$.
        Let the prediction window be $P=[1:\gamma n]$ for some
        $\gamma\in(0,\frac{1}{2e})$. Assume the predicted highest agent is neither agent $1$ nor agent $2$, that is $\predagent \not \in \{1,2
        \}$.
        Then
        \[
        \Pr\!\bigl[\predeventbar \cap \{1 \in \lphase \text{ and } 2 \in \sphase\} \mid \predagent \not \in \{1,2
        \} \bigr]
        =
        \frac{n\bigl(n(1-\gamma)+2e\gamma-2\bigr)}{4e\,(n-1)(n-2)}
        \ \ge\ \frac{1-\gamma}{4e}.
        \]
        \end{lemma}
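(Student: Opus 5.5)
The plan is to compute the probability exactly by sequential conditioning on where the three relevant agents land, using only the uniformity of $\sigma$. The event requires agent $1$ to be in $\lphase$, agent $2$ to be in $\sphase$, and the predicted agent $\predagent\notin\{1,2\}$ to be outside $P=[1:\gamma n]$. Since $\gamma<\frac{1}{2e}$, we have $P\subseteq \sphase$; this containment is the only structural fact needed.

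\textbf{Step 1 (agents $1$ and $2$).} Agent $1$ lands in $\lphase$ with probability $\frac{n/2}{n}=\frac12$. Conditional on that, agent $2$ is uniform over the remaining $n-1$ positions, so it lands in $\sphase$ with probability $\frac{n/(2e)}{n-1}$.

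\textbf{Step 2 (the predicted agent).} Conditional on Steps 1, the predicted agent is uniform over the remaining $n-2$ positions. The number of free slots in $P$ depends on whether agent $2$ sits in $P$.
\begin{itemize}
\item Given agent $2\in\sphase$, it lies in $P$ with probability $\frac{\gamma n}{n/(2e)}=2e\gamma$.
\item In that case $P$ has $\gamma n-1$ free slots; otherwise it has $\gamma n$ free slots.
\end{itemize}
Averaging, the expected number of free slots in $P$ is $\gamma n-2e\gamma$. Hence
\[
\Pr\bigl[\predeventbar\mid\cdot\bigr]=1-\frac{\gamma n-2e\gamma}{n-2}=\frac{n(1-\gamma)+2e\gamma-2}{n-2}.
\]
Multiplying the three factors gives
\[
\frac12\cdot\frac{n}{2e(n-1)}\cdot\frac{n(1-\gamma)+2e\gamma-2}{n-2}=\frac{n\bigl(n(1-\gamma)+2e\gamma-2\bigr)}{4e(n-1)(n-2)},
\]
which is the claimed identity.

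\textbf{Step 3 (the lower bound).} It suffices to show $n\bigl(n(1-\gamma)+2e\gamma-2\bigr)\ge(1-\gamma)(n-1)(n-2)$. Expanding both sides, the $(1-\gamma)n^2$ terms cancel. What remains is
\[
n\bigl(1+(2e-3)\gamma\bigr)-2(1-\gamma)\ge 0,
\]
which holds for all $n\ge 3$ because $2e-3>0$.

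The only step needing care is Step 2: agent $2$ may occupy a slot of $P$, and that dependence has to be handled rather than ignored. I will treat it by linearity of the number of free slots, as above. I would also state explicitly that positions are taken as integers, so that quantities like $n/(2e)$ and $\gamma n$ are read with the same rounding convention the paper uses elsewhere.
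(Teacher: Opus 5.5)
Your proof is correct and is the paper's argument. Both condition on agent $1$ landing in $\lphase$ (probability $\frac12$) and split on whether agent $2$ falls in $P$ or in $\sphase\setminus P$; your expected-free-slot averaging is exactly the two-term sum the paper writes out. Both then obtain the bound by checking that $n\bigl(1+(2e-3)\gamma\bigr)-2(1-\gamma)\ge 0$.
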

        
        \begin{proof}
        We prove:       
        \begin{align*}
            \Pr[\predeventbar \cap \{1 \in \lphase \text{ and } 2 \in \sphase\} \mid \predagent \not \in \{1,2
        \}] & = \frac{1}{2} \Pr[\predeventbar \cap 2 \in \sphase \mid 1 \in \lphase \cap \predagent \not \in \{1,2
        \}] \\
            & = \frac{1}{2} \left(\frac{\gamma n}{n-1} \cdot \frac{n-\gamma n-1}{n-2} + \frac{(\frac{1}{2e} - \gamma)n}{n-1}\cdot \frac{n-\gamma n -2}{n-2} \right)\\
            & = \frac{n(n(1-\gamma) + 2e\gamma-2)}{4e(n-1)(n-2)} \\
            & = \frac{1-\gamma}{4e} +\frac{n-2+\gamma\bigl((2e-3)n+2\bigr)}{4e\,(n-1)(n-2)} \\
            & \geq \frac{1-\gamma}{4e},
        \end{align*}
        for $n\geq3$.
        
        \end{proof}

        We can now lower bound the probability of $\lotalloc$:
        \begin{align*} 
        \Pr[\lotalloc \mid \predagent \notin \{1,2
        \}] & = \Pr[E_{RO} \mid \predeventbar \cap \{1 \in \lphase \text{ and } 2 \in \sphase\} \cap \predagent \not \in \{1,2
        \}] \\ & \cdot  \Pr[\predeventbar \cap \{1 \in \lphase \text{ and } 2 \in \sphase\} \mid \predagent \not \in \{1,2
        \}] \\
        & \geq \frac{4}{5} \cdot \frac{1-\gamma}{4e} =\frac{1-\gamma}{5e}.
        \end{align*}

        With this in hand, we can compute the expected utility $l_1$ of the lottery phase as:
        \begin{align*}
        l_1 &= \E[\text{Utility of lottery phase of }\msample \mid \predagent \not \in \{1,2
        \}] \\
            &\geq \E[\text{Utility of lottery phase of }\msample \mid \lotalloc \cap \predagent \not \in \{1,2
        \}] \cdot \Pr[\lotalloc \mid \predagent \not \in \{1,2
        \}]\\
            & = \frac{\LOT-(v_1-v_2)}{5} \cdot \frac{1}{25} \cdot \frac{1-\gamma}{5e} \\
            & = (1-\gamma)\frac{\LOT-(v_1-v_2)}{625e},
        \end{align*}

        where we have used \cref{lem:balanced-Sample-to-N,lem:balanced-to-OPT} for the utility guarantees.
        
        \paragraph{Case 2. Predicted max is the second-highest agent.}

        Suppose now the predicted top agent is $2$, that is, $\predagent = 2$. Recall that the guarantee for the lottery phase hinges on the events that i) the balanced property holds, ii) agent $1$ is in the lottery phase, and iii) agent $2$ is in the sample phase. Importantly, since agent $2$ is the predicted top agent, the probability of the intersection of events $\predeventbar \cap \{1 \in \lphase \text{ and } 2 \in \sphase\}$ is much smaller than in the previous case. Specifically, we compute:
        \begin{align*}
            \Pr[\lotalloc \mid \predagent = 2] & = \Pr[E_{RO} \mid \predeventbar \cap \{1 \in \lphase \text{ and } 2 \in \sphase\} \cap \predagent = 2] \\ 
            & \cdot \Pr[\predeventbar \cap \{1 \in \lphase \text{ and } 2 \in \sphase\} \mid \predagent = 2] \\
            & \geq \frac{4}{5}\cdot \frac{1}{2}\left(\frac{1}{2e} - \gamma\right) = \frac{1-2e\gamma}{5e}.
        \end{align*}

        With this in hand, we can compute the expected utility $l_2$ of the lottery phase (similarly to above and \Cref{ineq:expected-utility-M}):
        \begin{align*}
            l_2 & = \E[\text{Utility of lottery phase of }\msample\mid \predagent = 2] \\
            & \geq \E[\text{Utility of lottery phase of }\msample \mid \lotalloc \cap \predagent = 2] \cdot \Pr[\lotalloc \mid \predagent = 2]\\
            & = \frac{\LOT-(v_1-v_2)}{5} \cdot \frac{1}{25} \cdot \frac{1-2e\gamma}{5e} \\
            & = (1-2e\gamma)\frac{\LOT-(v_1-v_2)}{625e},
        \end{align*}

        where we have used \Cref{lem:balanced-Sample-to-N,lem:balanced-to-OPT} for the utility guarantees.

        \paragraph{Final Robustness Guarantee.}

        The robustness guarantee must be written as the minimum of the guarantees of the two subcases, that is $\min(l_1,l_2)$. From the expressions, it is obvious that $l_2$ is actually always the smallest, and with this, we lower bound the expected utility of mechanism $\msample$.
        \begin{equation}
        \begin{aligned}
        \E[\text{Utility of }\msample] 
            & \geq \E [\text{Utility of } \msample| \maxalloc ]\cdot \Pr [\maxalloc ] + \E [\text{Utility of } \msample| \lotalloc] \cdot \Pr [\lotalloc] \\
            & \geq (v_1-v_2) \cdot \left(\frac{1}{2e} - \frac{n(e-1)}{4e^2(n-1)} + \left(\frac{1}{2e}-\gamma\right)\left(\frac{\frac n{2e}-1}{n-1}\right)\right) \\
            & + (1-2e\gamma)\frac{\LOT-(v_1-v_2)}{625e}.
    \label{ineq:expected-utility-M-sample2}
    \end{aligned}
    \end{equation}

    We have thus expressed the robustness guarantee as a function of \(\gamma\) and \(n\). In particular, since the coefficient of \(v_1-v_2\) is positive, this implies the stated robustness guarantee \((1-2e\gamma)/(625e)\).    
    \end{proof}

\subsection{Missing proofs in the analysis of $\mpred$.}
\label{app:analysis-max}

    We recall the theorem describing the mechanism's $\mpred$ guarantees.

    \theoremmpred*
    
    \begin{proof}
    
    We set up some auxiliary notation. Let $s=\frac{n}{2e}$ denote the length of the sample phase, and let $m=\alpha n$ be the number of steps of the prediction rule. 
    We restate and prove \Cref{lem:consistency}, which we used to argue the consistency guarantee of the mechanism.

    \consistency*
    
    \begin{proof}
            
    Fix the arrival position of the predicted highest agent $\predindex$ to be $t\in P = \{s+1,\dots,s+m\}$. The mechanism allocates to this agent iff the maximum among the first $t-1$ arrivals lies in positions $\{1, 2, \dots, s\}$. Since the arrival order is uniformly random, the index of the maximum among the first $t-1$ arrivals is uniformly distributed over $\{1,\dots,t-1\}$, and therefore
    \[
    \Pr\left[\maxalloc \mid \predindex = t , \ \predagent = 1 \right]
    =\frac{s}{t-1}.
    \]

    Conditioned on $\predevent$ and $\predagent=1$, the arrival position $\predindex$ is uniform over
    $P$ (recall that $|P| = m$). Hence,
    \[
    \Pr[\maxalloc \mid \predevent,\ \predagent=1]
    =\frac{1}{m}\sum_{t=s+1}^{s+m}\frac{s}{t-1}
    = \frac{s}{m}\left(H_{s+m-1}-H_{s-1}\right).
    \]

    We explain how to lower bound the difference $H_{s+m-1}-H_{s-1}=\sum_{t=s}^{s+m-1}\frac{1}{t}$. Using the standard integral comparison for decreasing functions, we argue that
    \begin{equation}
    \label{ineq:harmonics}
    \int_{s}^{s+m}\frac{dx}{x}\ \le\ \sum_{t=s}^{s+m-1}\frac{1}{t} \quad \Rightarrow \quad \ln\!\left(\frac{s+m}{s}\right)
    \ \le\ H_{s+m-1}-H_{s-1}\         
    \end{equation}
    As such, we can lower bound the probability by:
    \begin{align*}
    \Pr[\maxalloc \cap \predevent \mid \predagent=1]
    & =
    \Pr[\maxalloc \mid \predevent,\ \predagent=1] \cdot \Pr[\predevent \mid \predagent=1] \\
    & \ge \frac{s}{m}
     \ln\!\left(\frac{s+m}{s}\right) \cdot \frac{m}{n} \\
     & = \frac{1}{2e}\ln(1+2e\alpha).
    \end{align*}
    \end{proof}

    We proceed now to the robustness guarantee. We omit conditioning on $\predagent \neq 1$ as it does not alter any of the arguments. As we have discussed in the main body, we will have to slightly alter the analysis of the max-phase. We compute the probability of event $\maxalloc$ with the following lemma.

    \begin{restatable}{lemma}{robustness}
    \label{lem:mpred-robustness}
    When the predictor is incorrect, mechanism $\mpred(\alpha)$ allocates to the highest value agent $v_1$ during the max phase with probability  
    \begin{align*}
        \Pr[\maxalloc]
            & \ge
            \frac{1}{2e(n-1)}
            \left(
            (n-1)(1-2\alpha)
            +
            \left(\alpha n - 1 + \frac{n}{2e}\right)\ln(1+2e\alpha)
            \right).
    \end{align*}
    \end{restatable}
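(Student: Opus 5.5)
The plan is to mirror the proof of \Cref{lem:msample-max-robustness}. We condition on an arbitrary incorrect prediction $\predagent \neq 1$, fix the arrival position $j \in [s+1:h]$ of agent $1$ (with $h = \frac{n}{2}$), and average over $j$ with weight $\frac1n$. Let $S_j$ denote the secretary event that the maximum among the first $j-1$ arrivals lies in the sample positions $[1:s]$. Given $S_j$, no agent in $[s+1:j-1]$ clears $\pmax$, so the max rule cannot fire before agent $1$, who does clear it. The only remaining way for agent $1$ to be preempted is the prediction rule: the predicted agent arrives at some $t \in P \cap [s+1:j-1]$ while the item is unsold. Since $S_j$ guarantees that the item is unsold at every such $t$, the event $S_j$ together with $\predindex \in P\cap[s+1:j-1]$ always results in preemption. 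I will therefore write
\[
\Pr[\maxalloc \mid 1 \text{ at } j] \;=\; \Pr[S_j \mid 1\text{ at } j] \;-\; \Pr\bigl[S_j \cap \{\predindex \in P\cap[s+1:j-1]\} \mid 1 \text{ at } j\bigr],
\]
where the first term is exactly $\frac{s}{j-1}$ because agent $1$'s position does not affect the relative order of the other agents.

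For the subtracted term, I condition further on $\predindex = t$ with $t \in [s+1:\min(j-1,s+m)]$, which has probability $\frac{1}{n-1}$ for each $t$. Given this, $S_j$ requires the maximum of the first $j-1$ arrivals to lie in the sample. The predicted agent sits outside the sample, so it cannot be that maximum. Hence the maximum must be among the other $j-2$ arrivals, whose relative order is uniform, and must land in one of the $s$ sample slots. This gives a conditional probability of at most $\frac{s}{j-2}$. Thus
\[
\Pr[\maxalloc \mid 1\text{ at } j] \;\ge\; \frac{s}{j-1} \;-\; \frac{\min(j-1-s,\,m)}{n-1}\cdot\frac{s}{j-2}.
\]
I then split the sum over $j$ into $j \in [s+1:s+m]$, where the count of bad positions is $j-1-s$, and $j \in [s+m+1:h]$, where it is $m$. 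The constraint $\alpha < \frac{e-1}{2e}$ guarantees $s+m \le h$, so the split is well defined.

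The remaining work is calculation. I rewrite $\frac{j-1-s}{j-2} = 1 - \frac{s-1}{j-2}$ so that every piece becomes either a constant or a harmonic difference such as $H_{h-1}-H_{s-1}$, $H_{s+m-2}-H_{s-2}$, or $H_{h-2}-H_{s+m-2}$. I then lower bound the positive harmonic sums by $\ln$ integrals as in \cref{ineq:harmonics}, giving $\ln(h/s)=1$ and $\ln(1+2e\alpha)$, and upper bound the subtracted ones by integrals shifted by one. Collecting terms with $s = \frac{n}{2e}$ and $m=\alpha n$ should produce $(n-1)(1-2\alpha)$ from the constant parts, and the coefficient $(\alpha n - 1 + \frac{n}{2e})$ on $\ln(1+2e\alpha)$ from combining the $m$ and $s-1$ factors.

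I expect the main obstacle to be this last bookkeeping step. The difficulty is that the subtracted harmonic sums have to be upper bounded, which introduces off-by-one slack in terms such as $\ln\frac{s+m-1}{s-2}$ versus $\ln(1+2e\alpha)$. That slack has to be absorbed by the positive terms, which is presumably where the hypothesis $n \ge 27$ from \Cref{thm:mpred} enters. My first attempt would be to bound the subtracted term crudely via $\frac{s}{j-2} \le \frac{s}{j-1}\cdot\frac{s}{s-1}$. If the resulting constants then fail to match the stated expression exactly, I would settle for matching it up to the lower-order terms permitted by $n\ge 27$.
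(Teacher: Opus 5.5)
Your decomposition is correct, and it takes a different route from the paper's. Given agent $1$ at position $j$, agent $1$ wins exactly when $S_j$ holds and $\predindex\notin P\cap[s+1:j-1]$. The bound $\frac{s}{j-2}$ is valid and is attained when the predicted agent is the lowest-valued one, so your pre-harmonic expression is the exact worst case.

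The paper instead splits on $\predevent$. When $\predindex\in P$, agent $1$ can only win from a slot of $P$ ahead of the predicted agent. That gives $\frac{s}{m(m-1)}\bigl((s+m-1)(H_{s+m-1}-H_{s-1})-m\bigr)$, which the paper bounds via \cref{ineq:harmonics}. When $\predindex\notin P$, the paper reuses the $1/e$ secretary bound of $\mdet$.

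\textbf{The gap.} The problem is your last step, and it is more than bookkeeping. Multiply through by $2e(n-1)=n(n-1)/s$ and set $X=H_{h-1}-H_{s-1}$, $Y=H_{s+m-2}-H_{s-2}$, $Z=H_{h-2}-H_{s+m-2}$, and $L=\ln(1+2e\alpha)$. The inequality you need is
\[
(n-1)X-m+(s-1)Y-mZ\;\ge\;(n-1)(1-2\alpha)+(s+m-1)L .
\]
Now plug in the first-order estimates you propose: $X\ge \ln\frac{h}{s}=1$, $Y\ge L$, and $Z\le \ln\frac{h-2}{s+m-2}$. The left side minus the right side then equals exactly $m\ln\frac{h(s+m-2)}{(h-2)(s+m)}-2\alpha$. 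As $n\to\infty$ this tends to $2\alpha-\frac{4e\alpha}{1+2e\alpha}$, which is negative.

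Using the sharper $Y\ge\ln\frac{s+m-1}{s-1}$ only moves the limit to $2\alpha-\frac{2e\alpha}{1+2e\alpha}$, which is still negative. Your constant terms also come out as $n-1-2\alpha n$, not $(n-1)(1-2\alpha)$.

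So no lower bound on $n$ rescues this. In the paper, $n\ge 27$ is used in \Cref{lem:E31-predeventbar}, not in this lemma. Nor can you settle for agreement up to lower-order terms, since the statement is an exact finite-$n$ inequality. The true margin is only $\Theta(1/n)$ in probability, roughly $e-1+\alpha$ in the normalized units above. First-order integral comparisons discard losses of exactly that order. Your crude step $\frac{s}{j-2}\le\frac{s}{j-1}\cdot\frac{s}{s-1}$ discards even more at the same order.

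\textbf{The fix.} What is missing is a second-order estimate on the positive term $X$. By convexity of $1/x$ (the trapezoid rule),
\[
X\;\ge\;\ln\frac{h}{s}+\frac{1}{2s}-\frac{1}{2h}\;=\;1+\frac{e-1}{n},
\]
which contributes an extra $(e-1)\frac{n-1}{n}$. Keep $Y\ge L$ and $Z\le\ln\frac{h-2}{s+m-2}$, and use $\ln(1-x)\ge-\frac{x}{1-x}$ and $-\ln(1-y)\ge y$. This gives
\[
\text{LHS}-\text{RHS}\;\ge\;(e-1)\frac{n-1}{n}+2\alpha-\frac{2\alpha n}{s+m-2}.
\]
This is positive for every $n\ge 27$ and every $\alpha\in(0,\frac{e-1}{2e})$. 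Its minimum is about $0.76$, at $n=27$. As $n\to\infty$ it is at least $e-3-\frac{1}{e}+2\sqrt{2/e}\approx 1.07$.

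With this one change your argument proves the lemma. It also handles the case $\predindex\notin P$ explicitly, where the paper simply imports the $1/e$ bound from $\mdet$.
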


        \begin{proof}
            For the max phase, as before, we quantify the probability that the agent with value $v_1$ is allocated during the max phase. To this end, we first condition on the highest value agent being in $\lphasebar$, as before. 
            We decompose the probability that we allocate to the highest value agent, subject to whether or not event $\predevent$ is realized. 

            If event $\predevent$ is realized, we need to account for the prediction rule not executing during the max phase. It is convenient to also condition on the event that agent $1$ also lies in $P$, which happens with probability: 
        
            \[
            \Pr[1 \in \sigma(P) | \predindex \in P\cap 1 \in \lphasebar] = \frac{m-1}{\frac{n}{2}-1}
            \]
            
            Condition now on the event that agent $1$ arrives at the exact position $j\in P$. The mechanism allocates to them if: 
            \begin{itemize}
                \item The maximum among the first $j-1$ arrivals lies in positions $[1:s]$, denoted as $S_j$ (this is the usual secretary condition),
                \item The predicted agent $\predindex$ appears after position $j$, that is $\predindex > j$.
            \end{itemize} 
            
            These two conditions give
            \begin{align*}    
            \Pr[\maxalloc \mid 1 \text{ at position } j,\, \predindex \in P]
            & =
            \Pr[S_j]\cdot \Pr[\predindex>j \mid \predindex \in P \cap  1 \text{ at position } j] \\
            & =
            \frac{s}{j-1}\cdot \frac{m-(j-s)}{m-1}.
            \end{align*}
            Averaging over the position $j$ of agent $1$ within $\sigma(P)$ yields the exact expression
            \begin{align*}
            \Pr[\maxalloc \mid 1 \in \sigma(P),\, \predindex \in P]
            & =
            \frac{1}{m}\sum_{j=s+1}^{s+m}\frac{s}{j-1}\cdot\frac{m-(j-s)}{m-1} \\
            & =
            \frac{s}{m(m-1)}\sum_{j=s+1}^{s+m}\left(\frac{s+m-1}{j-1}-1\right)
            \\
            & = 
            \frac{s}{m(m-1)}
            \Bigl((s+m-1)\bigl(H_{s+m-1}-H_{s-1}\bigr)-m\Bigr).
            \end{align*}
            
            We conclude that:
            
            \begin{align*}
            \Pr[\maxalloc \mid \predindex \in P \cap 1 \in \lphasebar]
            & = \Pr[\maxalloc \mid 1 \in \sigma(P),\, \predindex \in P] \cdot \Pr[1 \in \sigma(P) \mid \predindex \in P\cap 1 \in \lphasebar]\\ 
            & = 
            \frac{s}{m(m-1)}
            \Bigl((s+m-1)\bigl(H_{s+m-1}-H_{s-1}\bigr)-m\Bigr) \cdot \frac{m-1}{\frac{n}{2}-1}\\
            & \ge \frac{1}{\alpha e (n-2)}\Bigl(\Bigl(\frac{n}{2e}+\alpha n-1\Bigr)\ln(1+2e\alpha)-\alpha n\Bigr).
            \end{align*}
            where we have substituted the harmonic difference by \Cref{ineq:harmonics}.

            Notice now that the corresponding probability is:
            \[
            \Pr[\predindex \in P \cap 1 \in \lphasebar] = \Pr[\predindex \in P \mid 1 \in \lphasebar] \cdot \Pr[1 \in \lphasebar]  = \frac{\alpha(n-2)}{(n-1)} \frac{1}{2} = \frac{\alpha(n-2)}{2(n-1)}
            \]
            
            If event $\predevent$ is not realized (and we are still conditioning on $\{1 \in \lphasebar\}$), then this phase allocates to agent $1$ with probability $\frac{1}{e}$ (as in mechanism $\mdet$).

            The corresponding probability is:
            \[
            \Pr[\predindex \notin P \cap 1 \in \lphasebar] = \Pr[\predindex \notin P \mid 1 \in \lphasebar] \cdot \Pr[1 \in \lphasebar]  = \left(1-\frac{\alpha(n-2)}{(n-1)} \right) \frac{1}{2} = \frac{1}{2} - \frac{\alpha(n-2)}{2(n-1)}
            \]

            Putting things together, we have argued that:
            \begin{align*}
            \Pr[\maxalloc] 
             & = \Pr[\maxalloc \mid \predindex \in P \cap 1 \in \lphasebar] \cdot  \Pr[\predindex \in P \cap 1 \in \lphasebar] \\
             & + \Pr[\maxalloc \mid \predindex \notin P \cap 1 \in \lphasebar] \cdot \Pr[\predindex \notin P \cap 1 \in \lphasebar] \\
            & \ge
            \frac{1-2\alpha}{2e}
            +\frac{1}{2e(n-1)}
            \left(\alpha n - 1 + \frac{n}{2e}\right)\ln(1+2e\alpha).
            \end{align*}
    \end{proof}

    For the lottery phase, as before, we lower bound the expected utility of this phase under event $\lotalloc$, which is the event that guarantees reaching the lottery phase without allocating, and with the balancedness property holding. Under mechanism $\mpred$, $\lotalloc= E_{RO} \cap \predeventbar \cap \{1 \in \lphase \text{ and } 2 \in \sphase\}$, since event $\predeventbar \cap \{1 \in \lphase \text{ and } 2 \in \sphase\}$ precludes allocation in the max/prediction phase, and $E_{RO}$ guarantees balancedness. We highlight that, conditional on $\{1 \in \lphase \text{ and } 2 \in \sphase\}$, the balanced property $E_{RO}$ is independent of $\predeventbar$ (due to the symmetry of the balanced property). Note that the reason why we do not need to condition on agent $2$ being the predicted highest value agent (as we did for mechanism $\msample$) is that this scenario can only improve the performance of the mechanism $\mpred$ (since $\predeventbar$ is guaranteed if we conditioned on $\{1 \in \lphase \text{ and } 2 \in \sphase\}$). To compute the probability of $\lotalloc$, we first state a useful lemma.

    \begin{lemma}\label{lem:E31-predeventbar}
    Let $n\ge 27$ and let $\sigma$ be a uniformly random permutation of $[n]$. Let the prediction window be $P=[\frac{n}{2e}+1:(\frac{1}{2e} + \alpha)n]$ for some
        $\alpha\in(0,\frac{e-1}{2e})$. Assume the predicted highest agent is not agent $1$. Then
    \[
    \Pr\!\bigl[\predeventbar \cap \{1 \in \lphase \text{ and } 2 \in \sphase\}\bigr] 
    \geq \frac{1-\alpha}{4e}.
    \]
    \end{lemma}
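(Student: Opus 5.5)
The plan is a case split on the identity of the (incorrectly) predicted agent $\predagent\neq 1$: either $\predagent=2$ or $\predagent\ge 3$. In each case I compute the probability exactly by sequentially placing the relevant agents into positions of the uniformly random permutation, as in the proof of \Cref{lem:E31-predeventbar-predneq12}. Throughout I write $s=\frac{n}{2e}$ and $m=\alpha n$, and, as elsewhere in the paper, I treat these as integers.

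\textbf{Case $\predagent = 2$.} The window $P=[s+1:s+m]$ is disjoint from the sample phase $\sphase=[1:s]$, so the event $\{2\in\sphase\}$ already implies $\predeventbar$. The probability therefore reduces to $\Pr[1\in\lphase \text{ and } 2\in\sphase]$, which equals $\frac12\cdot\frac{s}{n-1}=\frac{n}{4e(n-1)}$. This is at least $\frac{1}{4e}\ge\frac{1-\alpha}{4e}$, which settles this case.

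\textbf{Case $\predagent=k\ge 3$.} I place the agents one at a time.
\begin{itemize}
\item Agent $1$ lands in $\lphase$ with probability $\frac12$.
\item Given that, agent $2$ lands in $\sphase$ with probability $\frac{s}{n-1}$.
\item Given both, agent $k$ is uniform over the remaining $n-2$ positions. All $m$ positions of $P$ are still free, because $P$ lies inside the max phase and is disjoint from both $\sphase$ and $\lphase$. So agent $k$ avoids $P$ with probability $\frac{n-2-m}{n-2}$.
\end{itemize}
The product is $\frac{n}{4e(n-1)}\cdot\frac{n-2-\alpha n}{n-2}$.

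The remaining step is to show this is at least $\frac{1-\alpha}{4e}$. This is equivalent to
\[
n\bigl((1-\alpha)n-2\bigr)\ \ge\ (1-\alpha)(n-1)(n-2).
\]
Expanding both sides and cancelling the $(1-\alpha)n^2$ terms, the difference LHS minus RHS equals $n(1-3\alpha)-2(1-\alpha)$. So the inequality holds iff $n\ge \frac{2(1-\alpha)}{1-3\alpha}$.

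This is the only delicate point, and it is where the hypotheses $\alpha<\frac{e-1}{2e}$ and $n\ge 27$ enter. Since $\frac{e-1}{2e}\approx 0.316<\frac13$, the denominator $1-3\alpha$ is positive. Moreover, $\frac{2(1-\alpha)}{1-3\alpha}$ is increasing in $\alpha$, so its supremum over the allowed range is attained at $\alpha=\frac{e-1}{2e}$. There it equals $\frac{2(e+1)/(2e)}{(3-e)/(2e)}=\frac{2(e+1)}{3-e}\approx 26.4<27$. Hence $n\ge 27$ suffices, which completes this case.

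Since both cases give the bound, the lemma follows for any incorrect prediction. I expect no further obstacles beyond this constant check; ignoring integer rounding of $s$ and $m$ follows the convention used throughout the paper.
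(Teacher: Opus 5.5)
Your proposal is correct and follows essentially the same route as the paper's proof: the same case split on whether $\predagent=2$, the same sequential-placement product $\frac{n}{4e(n-1)}\cdot\frac{n-\alpha n-2}{n-2}$, and the same final comparison (the paper writes the excess as $\frac{(n-2)-\alpha(3n-2)}{4e(n-1)(n-2)}$, whose numerator is exactly your $n(1-3\alpha)-2(1-\alpha)$). Your explicit check that $\frac{2(1-\alpha)}{1-3\alpha}\le\frac{2(e+1)}{3-e}\approx 26.4$ shows where the hypothesis $n\ge 27$ comes from, which the paper only asserts.
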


    \begin{proofof}{Lemma~\ref{lem:E31-predeventbar}}
    
    We will prove the claim for the case where $\predagent \neq 2$ (if $\predagent = 2$ then the desired probability is greater than $\frac{1}{4e}$). We have already proven that 
        \[
    \Pr\!\bigl[1 \in \lphase \text{ and } 2 \in \sphase\bigr] = \frac{1}{4e}\cdot\frac{n}{n-1}
    \]
    
    Out of the remaining $n-2$ spots, $\predindex$ can be placed in $n-2 - \alpha n$, thus:
    \[
    \Pr [\predeventbar \mid 1 \in \lphase \text{ and } 2 \in \sphase ] = \frac{n-\alpha n - 2}{n-2}
    \]
    
    As a result, we have argued:
    \begin{align*}
    \Pr\!\bigl[\predeventbar \cap \{1 \in \lphase \text{ and } 2 \in \sphase\}\bigr] & = \frac{1}{4e}\cdot \frac{n}{n-1} \cdot \frac{n-\alpha n - 2}{n-2} \\
    & =\frac{1-\alpha}{4e}+\frac{(n-2)-\alpha(3n-2)}{4e\,(n-1)(n-2)} \\
    & \geq \frac{1-\alpha}{4e},
    \end{align*}

    where the last inequality is because $\alpha \leq \frac{e-1}{2e}$ and $n\geq 27$.
    \end{proofof}

    Having computed this probability, we can now lower bound the probability of $\lotalloc$. 
    
    \begin{align*}
    \Pr[\lotalloc] & = \Pr[E_{RO} \mid \predeventbar \cap \{1 \in \lphase \text{ and } 2 \in \sphase\}] \cdot \Pr[\predeventbar \cap \{1 \in \lphase \text{ and } 2 \in \sphase\}] \\
    & \geq \frac{4}{5}\frac{1-\alpha}{4e} = \frac{1-\alpha}{5e}
    \end{align*}

    We can now compute the expected utility of $\mpred$ similarly to before:
        \begin{equation}
        \begin{aligned}
        \E[\text{Utility of }\mpred] 
            & \geq \E [\text{Utility of } \mpred| \maxalloc ]\cdot \Pr [\maxalloc ] + \E [\text{Utility of } \mpred| \lotalloc] \cdot \Pr [\lotalloc] \\
            & \geq (v_1-v_2) \cdot \left( \frac{1-2\alpha}{2e}+\frac{1}{2e(n-1)}\left(\alpha n - 1 + \frac{n}{2e}\right)\ln(1+2e\alpha)\right) \\
            & + \frac{1-\alpha}{5e} \frac{1}{25} \frac{\LOT-(v_1-v_2)}{5}.
    \label{ineq:expected-utility-M-pred}
    \end{aligned}
    \end{equation}

    We have thus expressed the robustness guarantee as a function of both $\alpha$ and $n$. Since the dependence on $n$ is monotone and only improves for larger $n$, it suffices to plug in any moderate lower bound on $n$ (for example, $n \geq 3$) in order to obtain a lower bound stated purely as a function of $\alpha$.
    \end{proof}

\subsection{Comparison of $\msample$ and $\mpred$.}
\label{app:compare}

For a better comparison between the guarantees of the two bounds, we will take into account the contribution of terms $v_1-v_2$ towards $\LOT$. To this end, since $\LOT \geq v_1 - v_2$, we introduce $\Delta = \frac{v_1-v_2}{\LOT}$, to rewrite our guarantees in a normalized form. Additionally, we will consider the comparisons asymptotically for $n$, to simplify a bit the expressions (most probability lower bounds that we computed thus far are essentially the expressions when $n$ goes to infinity). As we already discussed in the main body, we will have to make sure that the mechanisms are compared on the same consistency guarantee, and as such, we will be rewriting the guarantees of $\mpred$ in terms of $\alpha = \frac{e^{2e\gamma}-1}{2e}$. Finally, to ease tedious calculations, we used symbolic algebra tools (the SymPy Python library) for algebraic transformations. 

We start by rewriting the guarantees of $\msample$. We first look into the asymptotic probability of the event in \Cref{lem:msample-max-robustness}:
\[
\Pr[\maxalloc] \geq \frac{e+2-2e\gamma}{4e^2}.
\]

We express the guarantees of the mechanism as a function of $\Delta$.
\begin{align*}
 \E [\text{Utility of } \msample]
            & \geq \LOT\left[
            \frac{1-2e\gamma}{625e}
            +
            \Delta\left(
            \frac{e+2-2e\gamma}{4e^2}
            -
            \frac{1-2e\gamma}{625e}
            \right)\right].
\end{align*}

For $\mpred$, we first look into the asymptotic probabilities of the event in \Cref{lem:mpred-robustness}:

\[
\Pr[\maxalloc] \geq \frac{1}{2e}\Bigl(1-2\alpha+\Bigl(\alpha+\frac{1}{2e}\Bigr)\ln(1+2e\alpha)\Bigr).
\]

We recompute the expected utility guarantee of $\mpred$ after substituting $\alpha = \frac{e^{2e\gamma}-1}{2e}$ and 
$\Delta = \frac{v_1-v_2}{\LOT}$. 
\begin{align*}
        \E[\text{Utility of }\mpred] 
            & \geq (v_1-v_2) \cdot \frac{1}{2e}\Bigl(1-2\alpha+\Bigl(\alpha+\frac{1}{2e}\Bigr)\ln(1+2e\alpha)\Bigr) \\
            & \quad + \frac{1-\alpha}{5e} \frac{\LOT-(v_1-v_2)}{5} \cdot \frac{1}{25} \\
            & =
            \LOT\left[
            \Delta\left(
            \frac{e+1-e^{2e\gamma}}{2e^2}
            +
            \frac{\gamma e^{2e\gamma}}{2e}
            \right)
            +
            \frac{1-\Delta}{1250e^2}
            \left(2e+1-e^{2e\gamma}\right)
            \right].
\end{align*}

For the comparison, the valid domain is $\gamma\in[0,1/(2e)]$. As a sanity check, when $\gamma=0$ we also have $\alpha=0$, so both mechanisms have no prediction phase and recover the guarantee of $\mdet$. Additionally, we can inspect the two endpoints $\Delta=0$ and $\Delta=1$.

Suppose first that $\Delta=0$. Then the guarantees become
\begin{align*}
\E [\text{Utility of } \msample \mid \Delta = 0]
            & \geq \LOT\frac{1}{1250e}\Bigl(2-4e\gamma\Bigr),\\
\E[\text{Utility of }\mpred \mid \Delta = 0] 
            & \geq \LOT\frac{1}{1250e^2}\Bigl(2e+1-e^{2e\gamma}\Bigr).
\end{align*}
We note that for all $\gamma$ values in the domain $\mpred$ guarantee is weakly better than $\msample$.

Suppose next that $\Delta=1$. Then the guarantees become
\begin{align*}
\E [\text{Utility of } \msample \mid \Delta = 1]
            & \geq \LOT \left[\frac{e+2-2e\gamma}{4e^2}\right]\\
\E[\text{Utility of }\mpred \mid \Delta = 1] 
            & \geq
            \LOT\left[
            \frac{e+1-e^{2e\gamma}}{2e^2}
            +
            \frac{\gamma e^{2e\gamma}}{2e}
            \right].
\end{align*}
Again, we note that for all $\gamma$ values in the domain $\mpred$ guarantee is weakly better than $\msample$.

Thus far, at natural endpoints, we have identified that $\mpred$ has weakly better guarantees than $\msample$. We extend this with the following algebraic lemma, which argues that in fact, for all values in the domain of $\gamma$ and $\Delta$, mechanism $\mpred$ has a weakly better approximation guarantee than mechanism $\msample$.

\begin{lemma}\label{lem:F-ge-G}
Let $\Delta\in[0,1]$ and $\gamma\in[0,1/(2e)]$, and define
\[
F(\gamma,\Delta)
=
\Delta\left(
\frac{e+1-e^{2e\gamma}}{2e^2}
+
\frac{\gamma e^{2e\gamma}}{2e}
\right)
+
(1-\Delta)
\left(
\frac{2e+1-e^{2e\gamma}}{1250e^2}
\right),
\]
and
\[
G(\gamma,\Delta)
=
\Delta\left(
\frac{e+2-2e\gamma}{4e^2}
\right)
+
(1-\Delta)
\left(
\frac{1-2e\gamma}{625e}
\right).
\]
Then $F(\gamma,\Delta)\ge G(\gamma,\Delta)$.
\end{lemma}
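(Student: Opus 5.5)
Since both $F$ and $G$ are affine in $\Delta$, so is $F-G$. It therefore suffices to check $F\ge G$ at the endpoints $\Delta=0$ and $\Delta=1$ for every $\gamma\in[0,1/(2e)]$. I will substitute $x=2e\gamma\in[0,1]$ throughout, so that $e^{2e\gamma}=e^x$ and $\gamma=x/(2e)$.

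\textbf{Endpoint $\Delta=0$.} Here we need $\frac{2e+1-e^x}{1250e^2}\ge \frac{1-x}{625e}$. Multiplying by $1250e^2$, this becomes $h(x):=2e+1-e^x-2e(1-x)=1+2ex-e^x\ge 0$. Now $h$ is concave with $h(0)=0$ and $h(1)=1+e>0$, so $h\ge 0$ on $[0,1]$.

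\textbf{Endpoint $\Delta=1$.} Here we need
\[
\frac{e+1-e^x}{2e^2}+\frac{x e^x}{4e^2}\ \ge\ \frac{e+2-x}{4e^2}.
\]
Multiplying by $4e^2$, this becomes $k(x):=2e+2-2e^x+xe^x-e-2+x=e-(2-x)e^x+x\ge 0$. We compute $k(0)=e-2>0$ and $k(1)=e-e+1=1>0$. For the derivative, $k'(x)=1-e^x+xe^x$, so $k'(0)=0$. Differentiating again, $k''(x)=xe^x\ge 0$, hence $k'\ge 0$ on $[0,1]$. So $k$ is nondecreasing, and therefore $k\ge k(0)>0$.

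Combining the two endpoints, the affine function $\Delta\mapsto F-G$ is nonnegative at both $\Delta=0$ and $\Delta=1$, hence on all of $[0,1]$. The only step with any real content is the one-variable inequality $k\ge 0$, and the second-derivative argument above dispatches it. Everything else is bookkeeping: making sure the coefficient $\frac{\gamma e^{2e\gamma}}{2e}$ becomes $\frac{xe^x}{4e^2}$ correctly under the substitution.
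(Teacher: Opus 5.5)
Your proof is correct and essentially the same as the paper's: the paper also substitutes $x=2e\gamma$ and writes $1250e^2(F-G)=(1-\Delta)(1+2ex-e^x)+\frac{625}{2}\Delta\,\bigl(e+x-e^x(2-x)\bigr)$, which is exactly your affine endpoint decomposition, and it handles the second term with the same $k''(x)=xe^x\ge 0$ argument. The only cosmetic difference is that for $1+2ex-e^x\ge 0$ the paper uses $f(0)=0$ and $f'(x)=2e-e^x>0$, whereas you use concavity together with the two endpoint values.
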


\begin{proof}
Let $x\eqdef 2e\gamma$. Since $\gamma\in[0,1/(2e)]$, we have $x\in[0,1]$. Rewriting the two functions in terms of $x$, we have
\[
F(\gamma,\Delta)
=
\Delta\left(
\frac{2(e+1-e^x)+xe^x}{4e^2}
\right)
+
(1-\Delta)
\left(
\frac{2e+1-e^x}{1250e^2}
\right),
\]
and
\[
G(\gamma,\Delta)
=
\Delta\left(
\frac{e+2-x}{4e^2}
\right)
+
(1-\Delta)
\left(
\frac{1-x}{625e}
\right).
\]
A direct algebraic simplification gives
\begin{align}
\label{eq:comparison-corrected}
1250e^2\bigl(F(\gamma,\Delta)-G(\gamma,\Delta)\bigr)
&=
(1-\Delta)\bigl(1+2ex-e^x\bigr) \notag\\
&\quad+
\frac{625}{2}\Delta\bigl(e+x-e^x(2-x)\bigr).
\end{align}
We show that both terms on the right-hand side are nonnegative.

First, define $f(x)=1+2ex-e^x$. Since $f(0)=0$ and $f'(x)=2e-e^x\ge e>0$ 
for all $x\in[0,1]$, we get $f(x)\ge 0$ on $[0,1]$.

Second, for $x\in[0,1]$, define $h(x)=e+x-e^x(2-x)$, and compute $h'(x)=1-e^x(1-x)$ and $h''(x) = xe^x \geq 0$. Thus in the $[0,1]$ domain, $h''(x) \geq 0$ which implies $h'(x) \geq h'(0) = 0$ and thus $h(x) \geq h(0)=e-2>0$.

Since $\Delta\in[0,1]$, both coefficients $(1-\Delta)$ and $\Delta$ are nonnegative. Thus the right-hand side of \Cref{eq:comparison-corrected} is nonnegative, and therefore
\[
F(\gamma,\Delta)\ge G(\gamma,\Delta).
\]
\end{proof}

As a final point, we highlight that this subsection compares the two mechanisms according to the lower-bound analyses we have proved. Since these analyses are not tight, there may still be regimes where the actual performance of $\msample$ is better than the actual performance of $\mpred$.

\end{document}